\theoremstyle{plain}
\newtheorem{thm}{\protect\theoremname}
\theoremstyle{definition}
\newtheorem{defn}[thm]{\protect\definitionname}
\theoremstyle{remark}
\newtheorem{rem}[thm]{\protect\remarkname}
\providecommand{\definitionname}{Definition}
\providecommand{\remarkname}{Remark}
\providecommand{\theoremname}{Theorem}
\begin{document}
\title{Dynamical Resources}
\author{Gilad Gour}
\email{gour@ucalgary.ca}

\affiliation{Department of Mathematics and Statistics, University of Calgary, AB,
Canada T2N 1N4}
\affiliation{Institute for Quantum Science and Technology, University of Calgary,
AB, Canada T2N 1N4}
\author{Carlo Maria Scandolo}
\email{carlomaria.scandolo@ucalgary.ca}

\affiliation{Department of Mathematics and Statistics, University of Calgary, AB,
Canada T2N 1N4}
\affiliation{Institute for Quantum Science and Technology, University of Calgary,
AB, Canada T2N 1N4}
\begin{abstract}
Quantum channels are quintessential to quantum information, being
used in all protocols, and describing how systems evolve in space
and time. As such, they play a key role in the manipulation of quantum
resources, and they are often resources themselves, called dynamical
resources. This forces us to go beyond standard resource theories
of quantum states. Here we provide a rigorous foundation for dynamical
resource theories, where the resources into play are quantum channels,
explaining how to manipulate dynamical resources with free superchannels.
In particular, when the set of free superchannels is convex, we present
a novel construction of an infinite and complete family of convex
resource monotones, giving necessary and sufficient conditions for
convertibility under free superchannels. After showing that the conversion
problem in convex dynamical resource theories can be solved with conic
linear programming, we define various resource-theoretic protocols
for dynamical resources. These results serve as the framework for
the study of concrete examples of theories of dynamical resources,
such as dynamical entanglement theory.
\end{abstract}
\maketitle

\section{Introduction}

The remarkable success of quantum information stems from the fact
that quantum objects provide concrete advantages in several tasks.
Think, for instance, of entangled states \citep{Review-entanglement},
which can be harnessed to implement protocols that have no classical
analogue \citep{Teleportation,Dense-coding,Ekert,Entanglement-swapping}.
Similar to entanglement, other quantum features are resources, such
as coherence in quantum superpositions \citep{Review-coherence}.
The idea of entanglement and other quantum features helping in information-theoretic
tasks can be made rigorous with the framework of\emph{ resource theories}
\citep{Quantum-resource-1,Quantum-resource-2,Resource-knowledge,Resource-currencies,Gour-single-shot,Regula2017,Multiresource,Gour-review,Adesso-resource,Single-shot-new,Kuroiwa2020generalquantum}.
This framework is so general and powerful that it can be extended
even beyond the quantum case \citep{Resource-theories,Resource-monoid,Chiribella-Scandolo-entanglement,Fong,TowardsThermo,EastThesis,Purity,ScandoloPhD,Takagi-Regula,Resource-GPT-infinite}. 

Resource theories have been used to study a great number of physical
situations \citep{Gour-review}, always providing new insights into
quantum theory and novel results for quantum information protocols.
The basic idea behind them is that an agent operates on a quantum
system to perform some task, but they do \emph{not} have access to
the full set of quantum operations. Instead, they can only perform
a strict subset of them, called \emph{free operations}. Similarly,
they cannot prepare the full set of quantum states, but only a strict
subset of them, the \emph{free states}. The restriction usually comes
from the physical constraints of the task the agent is trying to perform:
free operations are those that are easy to implement in the physical
scenario the agent operates in. Anything that can help the agent overcome
their restriction is regarded as a \emph{valuable resource}.

The convertibility between two resources under free operations sets
up a preorder on the set of resources, whereby a resource is more
valuable than another if the former can be converted into the latter
by some free operation. In simpler terms, a resource is more valuable
than another if, from the former, it is possible to reach a larger
set of resources. This allows one to introduce the notion of resource
monotone, a real-valued function that assigns a ``price'' to resources
according to their preorder. Monotones often have a very important
operational and physical meaning (e.g.\ the entropy or the free energy
in quantum thermodynamics \citep{delRio,Xuereb,Lostaglio-thermo}),
for they quantify how well a given task can be performed \citep{Gour-review}.
Two tasks that are particularly relevant in resource theories are
extracting the maximum amount of the maximal resource out of a generic
resource (\emph{distillation}), and minimizing the amount of the maximal
resource necessary to produce a given resource (\emph{cost}) \citep{Quantum-resource-1,Quantum-resource-2,Resource-currencies,EastThesis,Gour-review,Kuroiwa2020generalquantum}.
The distillation and cost of a state obey a Carnot-like inequality,
with the distillation always less than or equal to the cost \citep{Thermodynamic-entanglement}.

Resource theories have been studied in great detail when the resources
involved are \emph{states} (also known as static resources) \citep{Gour-review}.
In this case, one wants to study the conversion between states. This
is the usual setting in which, e.g., one studies entanglement theory
\citep{Review-entanglement,Plenio-review}.

Nevertheless, if one looks closely at the first examples where entanglement
proved to be a resource (e.g.\ quantum teleportation \citep{Teleportation}
and dense coding \citep{Dense-coding}), one notices they involve
the conversion of a state into a particular channel, i.e.\ a static
resource into a dynamical one \citep{Devetak-Winter,Resource-calculus}.
Therefore the need to go beyond conversion between static resources
is built in the very first protocol showing the value of quantum resources.
This is supported by the fact that in physics everything, including
a state, can be viewed as a \emph{dynamical resource} \citep{Chiribella2008,Chiribella-purification,hardy2011}.
Extending resource theories from states to channels \citep{Gour-review,Resource-channels-1,Resource-channels-2,Gour-Winter}
has recently gained considerable attention \citep{Resource-theories,Fong,Coherence-beyond-states,Pirandola-LOCC,Gour2018,Gour2018a,Li,Berta-cost,Wilde-cost,WW18,Coherence-beyond2,Rosset-resource,Egloff-resource,Thermal-capacity,Magic-channels,Wang-magic,Process-Markov,Gaurav,Takagi-communication,Wilde-entanglement,Dynamical-entanglement,Dynamical-coherence-entanglement,Wolfe2020,Schmid2020,Rosset-distributed,LOSR-nonlocality},
because of their relevance in a lot of information-theoretic situations
\citep{Circuit-architecture,Chiribella2008,Gour-review,Resource-channels-2}.
Moreover, since quantum channels represent the most general ways in
which a physical system evolves, for a more effective exploitation
of quantum resources, it is essential to understand how they are consumed
or produced by evolution.

In theories of dynamical resources, the agent converts different channels
by means of a restricted set of \emph{supermaps} \citep{Chiribella2008,Hierarchy-combs,Switch,Perinotti1,Perinotti2,Gour2018,Supermeasurements}.
In particular, we focus on supermaps that send quantum channels to
quantum channels. They are called \emph{superchannels}. They are not
just abstract entities, but they can be realized in a laboratory with
a pre-processing channel and a post-processing channel, connected
by a memory system \citep{Chiribella2008,Gour2018}. Clearly, if we
take the pre- and the post-processing of a superchannel to be free
channels (according to some resource theory of states), we have a
free superchannel \citep{Resource-theories}, which sends free channels
to free channels (even in a complete sense). This is the most common
approach to constructing free superchannels \citep{Resource-theories,Resource-channels-2}.

In this article, which is a companion to Ref.~\citep{Dynamical-entanglement},
we present the general framework of resource theories of quantum processes,
which constitutes the mathematical framework for our treatment of
dynamical entanglement announced in Ref.~\citep{Dynamical-entanglement}.
We note how for the largest class of free superchannels in a resource
theory, which are completely resource non-generating superchannels,
it is not clear if they can actually be realized in terms of free
pre- and post-processing, and we conjecture that this is not the case.

Then we turn to the conversion problem, showing two ways to solve
it in convex dynamical resource theories by means of a conic linear
program. In the first approach, we construct a complete family of
convex dynamical monotones, which give necessary and sufficient conditions
for convertibility under free superchannels. In the second approach,
solving the conversion problem becomes equivalent to calculating a
particular type of distance---the conversion distance---from one
channel to another.

Finally, we present the classic resource-theoretic protocols of cost
and distillation both in the single-shot and the asymptotic regime.
We note that for dynamical resources such protocols take a new twist
from their static counterpart, whereby various dynamical resources
can also be applied one after another (and not just in parallel) to
create an \emph{adaptive strategy} \citep{Adaptive-metrology,Pirandola-adaptive-metrology,Pirandola-LOCC,Kaur2017,Wilde-cost,Wilde-entanglement,Dynamical-entanglement}.

The article is organized as follows. In section~\ref{sec:Preliminaries},
we present basic facts on the formalism of superchannels, including
a new result on the uniqueness of a superchannel realization in terms
of pre- and post-processing. In the same section we give an overview
of quantum resource theories as well. Section~\ref{sec:Resource-processes}
is all devoted to the general formalism of resource theories for quantum
processes, with a new construction of a complete set of monotones,
and precise definitions of several conversion protocols. Conclusions
are drawn in section~\ref{sec:Conclusions}.

\section{Preliminaries\label{sec:Preliminaries}}

This section contains some basic notions to understand the rest of
this article. First we specify the notation we use, and then we move
to give a brief overview of the formalism used to manipulate quantum
channels, namely supermaps, superchannels, and combs. Here we also
prove a new result (theorem~\ref{thm:uniqueness}), concerning the
uniqueness of the realization of a superchannel in terms of quantum
channels. Finally we give a brief introduction to resource theories.

\subsection{Notation}

Physical systems and their corresponding Hilbert spaces will be denoted
by $A$, $B$, $C$, etc, where we will use the notation $AB$ to
mean $A\otimes B$. Dimensions will be denoted with vertical bars;
e.g.\ the dimension of system $A$ will be denoted by $\left|A\right|$.
The tilde symbol will be reserved to indicate a replica of a system.
For example, $\widetilde{A}$ denotes a replica of $A$, i.e.\ $\left|A\right|=\left|\widetilde{A}\right|$.
Density matrices acting on Hilbert spaces will be denoted by lowercase
Greek letter $\rho$, $\sigma$, etc, with one exception for the maximally
mixed state (i.e.\ the uniform state), which will be denoted by $u_{A}:=\frac{1}{\left|A\right|}I_{A}$.

The set of all bounded operators acting on system $A$ is denoted
by $\mathfrak{B}\left(A\right)$, the set of all Hermitian matrices
acting on $A$ by $\mathrm{Herm}\left(A\right)$, and the set of all
density matrices acting on system $A$ by $\mathfrak{D}\left(A\right)$.
Note that $\mathfrak{D}\left(A\right)\subset\mathrm{Herm}\left(A\right)\subset\mathfrak{B}\left(A\right)$.
We use the calligraphic letters $\mathcal{D}$, $\mathcal{E}$, $\mathcal{F}$,
etc.\ to denote quantum channels, reserving $\mathcal{V}$ to represent
an isometry map. The identity map on a system $A$ will be denoted
by $\mathsf{id}_{A}$. The set of all linear maps from $\mathfrak{B}\left(A\right)$
to $\mathfrak{B}\left(B\right)$ is denoted by $\mathfrak{L}\left(A\to B\right)$,
the set of all completely positive (CP) maps by $\mathrm{CP}\left(A\to B\right)$,
and the set of quantum channels by $\mathrm{CPTP}\left(A\to B\right)$.
Note that $\mathrm{CPTP}\left(A\to B\right)\subset\mathrm{CP}\left(A\to B\right)\subset\mathfrak{L}\left(A\to B\right)$.
$\mathrm{Herm}\left(A\to B\right)$ will denote the real vector space
of all Hermitian-preserving maps in $\mathfrak{L}\left(A\to B\right)$.
We will write $\mathcal{N}\geq0$ to mean that the map $\mathcal{N}\in\mathrm{Herm}\left(A\to B\right)$
is completely positive.

Since in this paper we focus on dynamical resources in the form of
quantum channels, unless otherwise specified, it will be convenient
to associate two subsystems $A_{0}$ and $A_{1}$ with every physical
system $A$, referring, respectively, to the input and output of the
resource. Hence, any physical system will be comprised of two subsystems
$A=\left(A_{0},A_{1}\right)$, even those representing a static resource,
in which case we simply have $\left|A_{0}\right|=1$. For simplicity,
we will denote a channel with a subscript $A$, e.g.\ $\mathcal{N}_{A}$,
to mean that it is an element of $\mathrm{CPTP}\left(A_{0}\to A_{1}\right)$.
Similarly, a bipartite channel in $\mathrm{CPTP}\left(A_{0}B_{0}\to A_{1}B_{1}\right)$
will be denoted by $\mathcal{N}_{AB}$. This notation makes the analogy
with bipartite states more transparent.

In this setting, when we consider $A=\left(A_{0},A_{1}\right)$, $B=\left(B_{0},B_{1}\right)$,
etc.\ comprised of input and output subsystems, the symbol $\mathfrak{L}\left(A\to B\right)$
refers to all linear maps from the vector space $\mathfrak{L}\left(A_{0}\to A_{1}\right)$
to the vector space $\mathfrak{L}\left(B_{0}\to B_{1}\right)$. Similarly,
$\mathrm{Herm}\left(A\to B\right)\subset\mathfrak{L}\left(A\to B\right)$
is a real vector space consisting of all the linear maps that take
elements in $\mathrm{Herm}\left(A_{0}\to A_{1}\right)$ to elements
in $\mathrm{Herm}\left(B_{0}\to B_{1}\right)$. In other terms, maps
in $\mathrm{Herm}\left(A\to B\right)$ take Hermitian-preserving maps
to Hermitian-preserving maps. Linear maps in $\mathfrak{L}\left(A\to B\right)$
and $\mathrm{Herm}\left(A\to B\right)$ will be called \emph{supermaps},
and will be denoted by capital Greek letters $\Theta$, $\Upsilon$,
$\Omega$, etc. The identity supermap in $\mathfrak{L}\left(A\to A\right)$
will be denoted by $\mathbbm{1}_{A}$. 

We will use square brackets to denote the action of a supermap $\Theta_{A\to B}\in\mathfrak{L}\left(A\to B\right)$
on a linear map $\mathcal{N}_{A}\in\mathfrak{L}\left(A_{0}\to A_{1}\right)$.
For example, $\Theta_{A\to B}\left[\mathcal{N}_{A}\right]$ is a linear
map in $\mathfrak{L}\left(B_{0}\to B_{1}\right)$ obtained from the
action of the supermap $\Theta$ on the map $\mathcal{N}$. Moreover,
for a simpler notation, the identity supermap will not appear explicitly
in equations; e.g.\ $\Theta_{A\to B}\left[\mathcal{N}_{RA}\right]$
will mean $\left(\mathbbm{1}_{R}\otimes\Theta_{A\to B}\right)\left[\mathcal{N}_{RA}\right]$.
Instead, the action of linear map (e.g.\ quantum channel) $\mathcal{N}_{A}\in\mathfrak{L}\left(A_{0}\to A_{1}\right)$
on a matrix $\rho\in\mathfrak{B}\left(A_{0}\right)$ is written with
round brackets, i.e.\ $\mathcal{N}_{A}\left(\rho_{A_{0}}\right)\in\mathfrak{B}\left(A_{1}\right)$.

Finally, we adopt the following convention concerning partial traces:
when a system is missing, we take the partial trace over the missing
system. This applies to matrices as well as to maps. For example,
if $M_{AB}$ is a matrix on $A_{0}A_{1}B_{0}B_{1}$, $M_{AB_{0}}$
denotes the partial trace on the missing system $B_{1}$: $M_{AB_{0}}:=\mathrm{Tr}_{B_{1}}\left[M_{AB}\right]$.

\subsection{Supermaps and superchannels}

The space $\mathfrak{L}\left(A_{0}\to A_{1}\right)$ is equipped with
an inner product given by
\begin{equation}
\left\langle \mathcal{N}_{A},\mathcal{M}_{A}\right\rangle :=\sum_{i,j}\left\langle \mathcal{N}_{A}\left(\left|i\right\rangle \left\langle j\right|_{A_{0}}\right),\mathcal{M}_{A}\left(\left|i\right\rangle \left\langle j\right|_{A_{0}}\right)\right\rangle _{\mathrm{HS}},\label{inm}
\end{equation}
where $\left\langle X,Y\right\rangle _{\mathrm{HS}}:=\mathrm{tr}\left[X^{\dagger}Y\right]$
is the Hilbert-Schmidt inner product between matrices $X,Y\in\mathfrak{B}\left(A_{1}\right)$.
The inner product above can be expressed in terms of the Choi matrices
of $\mathcal{N}$ and $\mathcal{M}$. Denote by $J_{A}^{\mathcal{N}}:=\mathcal{N}_{\widetilde{A}_{0}\to A_{1}}\left(\phi_{A_{0}\widetilde{A}_{0}}^{+}\right)$
the Choi matrix of $\mathcal{N}_{A}$, where $\phi_{A_{0}\widetilde{A}_{0}}^{+}:=\left|\phi^{+}\right\rangle \left\langle \phi^{+}\right|_{A_{0}\widetilde{A}_{0}}$
and $\left|\phi^{+}\right\rangle _{A_{0}\widetilde{A}_{0}}=\sum_{i}\left|ii\right\rangle _{A_{0}\widetilde{A}_{0}}$
is the unnormalized maximally entangled state. With this notation,
the inner product between $\mathcal{N}_{A}$ and $\mathcal{M}_{A}$
can be expressed as \citep{Gour2018}
\[
\left\langle \mathcal{N}_{A},\mathcal{M}_{A}\right\rangle =\left\langle J_{A}^{\mathcal{N}},J_{A}^{\mathcal{M}}\right\rangle _{\mathrm{HS}}=\mathrm{tr}\left[\left(J_{A}^{\mathcal{N}}\right)^{\dagger}J_{A}^{\mathcal{M}}\right].
\]
The canonical orthonormal basis (relative to the above inner product)
is given by $\left\{ \mathcal{E}_{A}^{ijk\ell}\right\} $, where 
\[
\mathcal{E}_{A}^{ijk\ell}\left(\rho_{A_{0}}\right):=\left\langle i\middle|\rho_{A_{0}}\middle|j\right\rangle \left|k\right\rangle \left\langle \ell\right|_{A_{1}}\quad\forall\rho\in\mathfrak{B}\left(A_{0}\right).
\]

The space $\mathfrak{L}\left(A\to B\right)$ with $A=\left(A_{0},A_{1}\right)$
and $B=\left(B_{0},B_{1}\right)$ is also equipped with the following
inner product: given $\Theta,\Omega\in\mathfrak{L}(A\to B)$ 
\[
\left\langle \Theta_{A\to B},\Omega_{A\to B}\right\rangle :=\sum_{i,j,k,\ell}\left\langle \Theta_{A\to B}\left[\mathcal{E}_{A}^{ijk\ell}\right],\Omega_{A\to B}\left[\mathcal{E}_{A}^{ijk\ell}\right]\right\rangle ,
\]
where the inner product on the right-hand side is the inner product
between maps as defined in Eq.~\eqref{inm}. Similarly to the inner
product between maps, the inner product between supermaps can also
be expressed in terms of Choi matrices. We define the \emph{Choi matrix}
of a supermap $\Theta\in\mathfrak{L}\left(A\to B\right)$ to be \citep{Gour2018}
\[
\mathbf{J}_{AB}^{\Theta}:=\sum_{i,j,k,\ell}J_{A}^{\mathcal{E}^{ijk\ell}}\otimes J_{B}^{\Theta\left[\mathcal{E}_{A}^{ijk\ell}\right]}.
\]
Then, with this notation, the inner product between two supermaps
$\Theta$ and $\Omega$ can be expressed as 
\[
\left\langle \Theta_{A\to B},\Omega_{A\to B}\right\rangle =\left\langle \mathbf{J}_{AB}^{\Theta},\mathbf{J}_{AB}^{\Omega}\right\rangle _{\mathrm{HS}}=\mathrm{tr}\left[\left(\mathbf{J}_{AB}^{\Theta}\right)^{\dagger}\mathbf{J}_{AB}^{\Omega}\right].
\]

The Choi matrix of a supermap $\Theta\in\mathfrak{L}\left(A\to B\right)$
can also be expressed in other three alternative ways \citep{Gour2018}.
First, from its definition, $\mathbf{J}_{AB}^{\Theta}$ can be expressed
as the Choi matrix of the map 
\[
\mathcal{P}_{AB}^{\Theta}:=\Theta_{\widetilde{A}\to B}\left[\Phi_{A\widetilde{A}}^{+}\right],
\]
where the map $\Phi_{A\widetilde{A}}^{+}$ is defined as 
\[
\Phi_{A\widetilde{A}}^{+}:=\sum_{i,j,k,\ell}\mathcal{E}_{A}^{ijk\ell}\otimes\mathcal{E}_{\widetilde{A}}^{ijk\ell}.
\]
A simple calculation shows that $\Phi_{A\widetilde{A}}^{+}$ is completely
positive, and acts on $\rho\in\mathfrak{B}\left(A_{0}\widetilde{A}_{0}\right)$
as 
\[
\Phi_{A\widetilde{A}}^{+}\left(\rho_{A_{0}\widetilde{A}_{0}}\right)=\mathrm{tr}\left[\rho_{A_{0}\widetilde{A}_{0}}\phi_{A_{0}\widetilde{A}_{0}}^{+}\right]\phi_{A_{1}\widetilde{A}_{1}}^{+}.
\]
In other terms, the CP map $\Phi_{A\widetilde{A}}^{+}$ can be viewed
as a generalization of the (unnormalized) maximally entangled state
$\phi_{A_{0}\widetilde{A}_{0}}^{+}$.

A supermap $\Theta\in\mathfrak{L}\left(A\to B\right)$ can also be
characterized by its action on Choi matrices. One can define a linear
map $\mathcal{R}^{\Theta}:\mathfrak{B}\left(A\right)\to\mathfrak{B}\left(B\right)$
as 
\[
\mathcal{R}_{A\to B}^{\Theta}\left(\rho_{A}\right):=\mathrm{tr}_{A}\left[\mathbf{J}_{AB}^{\Theta}\left(\rho_{A}^{T}\otimes I_{B}\right)\right]\quad\forall\rho\in\mathfrak{B}\left(A\right).
\]
With this definition, $\mathbf{J}_{AB}^{\Theta}$ can be viewed as
the Choi matrix of $\mathcal{R}_{A\to B}^{\Theta}$. Note that although
$\mathcal{P}_{AB}^{\Theta}$ and $\mathcal{R}_{A\to B}^{\Theta}$
have the same Choi matrix $\mathbf{J}_{AB}^{\Theta}$, $\mathcal{P}_{AB}^{\Theta}$
takes systems $A_{0}B_{0}$ to $A_{1}B_{1}$, whereas the map $\mathcal{R}^{\Theta}$
takes system $A=\left(A_{0},A_{1}\right)$ to system $B=\left(B_{0},B_{1}\right)$.
This brings us to the last representation of a supermap in terms of
a linear map $\mathcal{Q}^{\Theta}:\mathfrak{B}\left(A_{1}B_{0}\right)\to\mathfrak{B}\left(A_{0}B_{1}\right)$,
which is defined as the map satisfying 
\begin{equation}
\mathbf{J}_{AB}^{\Theta}:=\mathcal{Q}_{\widetilde{A}_{1}\widetilde{B}_{0}\to A_{0}B_{1}}^{\Theta}\left(\phi_{A_{1}\widetilde{A}_{1}}^{+}\otimes\phi_{B_{0}\widetilde{B}_{0}}^{+}\right),\label{eq:def Q}
\end{equation}
or as $\mathcal{Q}^{\Theta}:=\mathbbm{1}_{A}\otimes\Theta_{A\rightarrow B}\left[\mathcal{S}_{A}\right]$,
where $\mathcal{S}_{A}$ is the swap from $A_{1}$ to $A_{0}$. All
these three representations of a supermap, $\mathcal{P}^{\Theta}$,
$\mathcal{Q}^{\Theta}$, and $\mathcal{R}^{\Theta}$, play a useful
role in the study of quantum resource theories, as shown in Ref.~\citep{Gour-Scandolo}
in the case of the entanglement of bipartite channels.

A \emph{superchannel} is a supermap $\Theta_{A\to B}\in\mathfrak{L}\left(A\to B\right)$
that takes quantum channels to quantum channels even when tensored
with the identity supermap \citep{Chiribella2008,Switch,Hierarchy-combs,Perinotti1,Perinotti2,Gour2018,Supermeasurements}.
More precisely, $\Theta_{A\to B}\in\mathfrak{L}\left(A\to B\right)$
is called a superchannel if it satisfies the following two conditions: 
\begin{enumerate}
\item For any trace-preserving map $\mathcal{N}_{A}\in\mathfrak{L}\left(A_{0}\to A_{1}\right)$,
the map $\Theta_{A\to B}\left[\mathcal{N}_{A}\right]$ is a trace-preserving
map in $\mathfrak{L}\left(B_{0}\to B_{1}\right)$.
\item For any system $R=\left(R_{0},R_{1}\right)$ and any bipartite CP
map $\mathcal{N}_{RA}\in\mathrm{CP}\left(R_{0}A_{0}\to R_{1}A_{1}\right)$,
the map $\Theta_{A\to B}\left[\mathcal{N}_{RA}\right]$ is also CP.
\end{enumerate}
We will also say that a supermap $\Theta_{A\to B}\in\mathfrak{L}\left(A\to B\right)$,
is \emph{positive} if it takes CP maps to CP maps, and \emph{completely
positive} (CP), if it satisfies the second condition above \citep{Chiribella2008,Gour2018}.
Therefore, a superchannel is a CP supermap that takes trace-preserving
maps to trace-preserving maps \citep{Gour2018,Supermeasurements}.
We will denote the set of superchannels from $A$ to $B$ by $\mathfrak{S}\left(A\rightarrow B\right)$.
Note that $\mathfrak{S}\left(A\rightarrow B\right)\subset\mathfrak{L}\left(A\rightarrow B\right)$.

The above definition is axiomatic and minimalist, in the sense that
any physical evolution (or simulation) of a quantum channel must satisfy
these two basic conditions. The third part of the following theorem
shows that these two conditions are sufficient to ensure that superchannels
are indeed \emph{physical} processes.
\begin{thm}[\citep{Chiribella2008,Gour2018}]
\label{premain}Let $\Theta\in\mathfrak{L}\left(A\to B\right)$.
The following are equivalent.
\begin{enumerate}
\item $\Theta$ is a superchannel.
\item The Choi matrix $\mathbf{J}_{AB}^{\Theta}\geq0$ of $\Theta$ has
marginals 
\begin{equation}
\mathbf{J}_{A_{1}B_{0}}^{\Theta}=I_{A_{1}B_{0}},\qquad\mathbf{J}_{AB_{0}}^{\Theta}=\mathbf{J}_{A_{0}B_{0}}^{\Theta}\otimes u_{A_{1}},\label{marginals}
\end{equation}
where $u_{A_{1}}$ is the maximally mixed state (i.e.\ the uniform
state) on system $A_{1}$.
\item There exists a Hilbert space $E$, with $\left|E\right|\leq\left|A_{0}B_{0}\right|$,
and two CPTP maps $\mathcal{F}\in\mathrm{CPTP}\left(B_{0}\to EA_{0}\right)$
and $\mathcal{E}\in\mathrm{CPTP}\left(EA_{1}\to B_{1}\right)$ such
that for all $\mathcal{N}_{A}\in\mathfrak{L}\left(A_{0}\to A_{1}\right)$
\begin{equation}
\Theta\left[\mathcal{N}_{A}\right]=\mathcal{E}_{EA_{1}\to B_{1}}\circ\mathcal{N}_{A_{0}\to A_{1}}\circ\mathcal{F}_{B_{0}\to EA_{0}}\label{realization0}
\end{equation}
(see Fig.~\ref{superchannel}).
\begin{figure}
\begin{centering}
\includegraphics[width=1\columnwidth]{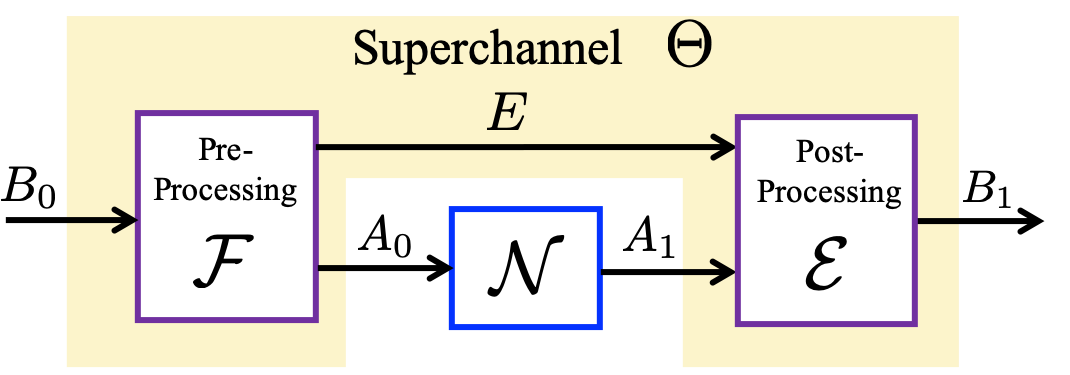}
\par\end{centering}
\caption{\label{superchannel}Realization of a superchannel in terms of a pre-processing
channel $\mathcal{F}$ and a post-processing channel $\mathcal{E}$.}
\end{figure}
 Furthermore, $\mathcal{Q}_{A_{1}B_{0}\to A_{0}B_{1}}^{\Theta}=\mathcal{E}_{EA_{1}\to B_{1}}\circ\mathcal{F}_{B_{0}\to EA_{0}}\in\mathrm{CPTP}\left(A_{1}B_{0}\to A_{0}B_{1}\right)$,
and $\mathcal{F}$ can be taken to be an isometry.
\item For every $\mathcal{N}\in\mathrm{CPTP}\left(A_{0}\to A_{1}\right)$,
the matrix $\mathcal{R}_{A\to B}^{\Theta}\left(J_{A}^{\mathcal{N}}\right)$
is a Choi matrix of a quantum channel. That is, 
\[
\mathcal{R}_{A\to B}^{\Theta}\left(J_{A}^{\mathcal{N}}\right)\geq0\text{ and }\mathrm{tr}_{B_{1}}\left[\mathcal{R}_{A\to B}^{\Theta}\left(J_{A}^{\mathcal{N}}\right)\right]=I_{B_{0}}.
\]
\end{enumerate}
\end{thm}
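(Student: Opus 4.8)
The plan is to establish $(1)\Leftrightarrow(2)$, then $(2)\Rightarrow(3)\Rightarrow(1)$, and finally $(2)\Leftrightarrow(4)$, with the workhorse throughout being the identity $J_{B}^{\Theta[\mathcal{N}_{A}]}=\mathcal{R}_{A\to B}^{\Theta}\big(J_{A}^{\mathcal{N}}\big)$, which one checks by expanding an arbitrary $\mathcal{N}_{A}$ in the orthonormal basis $\{\mathcal{E}_{A}^{ijk\ell}\}$ and recalling that $\mathbf{J}_{AB}^{\Theta}$ is precisely the Choi matrix of $\mathcal{R}^{\Theta}$.

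For $(1)\Leftrightarrow(2)$ I would treat the two defining axioms of a superchannel separately. \emph{Complete positivity:} a bipartite CP map $\mathcal{N}_{RA}$ has $J_{RA}^{\mathcal{N}}\geq0$ and, by the identity above applied with a spectator $R$, $J_{RB}^{\Theta[\mathcal{N}_{RA}]}=\big(\mathsf{id}_{R}\otimes\mathcal{R}_{A\to B}^{\Theta}\big)\big(J_{RA}^{\mathcal{N}}\big)$; since $J_{RA}^{\mathcal{N}}$ ranges over all positive operators as $R$ and $\mathcal{N}_{RA}$ vary, the second superchannel axiom is equivalent to $\mathcal{R}^{\Theta}$ being completely positive as an ordinary map $\mathfrak{B}(A)\to\mathfrak{B}(B)$, i.e.\ to $\mathbf{J}_{AB}^{\Theta}\geq0$. \emph{Trace preservation:} $\mathcal{M}_{B}$ is trace preserving iff $\mathrm{tr}_{B_{1}}J_{B}^{\mathcal{M}}=I_{B_{0}}$, and the Choi matrices of trace-preserving $\mathcal{N}_{A}$ sweep out exactly the affine subspace $\{X:\mathrm{tr}_{A_{1}}X=I_{A_{0}}\}$; imposing $\mathrm{tr}_{B_{1}}\mathcal{R}^{\Theta}(X)=I_{B_{0}}$ there and using $\mathrm{tr}_{B_{1}}\mathcal{R}^{\Theta}(X)=\mathrm{tr}_{A}\big[\mathbf{J}_{AB_{0}}^{\Theta}(X^{T}\otimes I_{B_{0}})\big]$, the ``directional'' part (vanishing on every $\sigma$ with $\mathrm{tr}_{A_{1}}\sigma=0$) forces $\mathbf{J}_{AB_{0}}^{\Theta}=\mathbf{J}_{A_{0}B_{0}}^{\Theta}\otimes u_{A_{1}}$, while normalization at one point forces $\mathbf{J}_{B_{0}}^{\Theta}=|A_{1}|I_{B_{0}}$; jointly these are exactly the two marginal conditions of \eqref{marginals}.

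The heart of the theorem is $(2)\Rightarrow(3)$. Here I would read $\mathbf{J}_{AB}^{\Theta}$, through \eqref{eq:def Q}, as the Choi matrix of $\mathcal{Q}^{\Theta}\colon\mathfrak{B}(A_{1}B_{0})\to\mathfrak{B}(A_{0}B_{1})$: then $\mathbf{J}_{AB}^{\Theta}\geq0$ says $\mathcal{Q}^{\Theta}$ is CP, $\mathbf{J}_{A_{1}B_{0}}^{\Theta}=I_{A_{1}B_{0}}$ says it is trace preserving, so $\mathcal{Q}^{\Theta}\in\mathrm{CPTP}(A_{1}B_{0}\to A_{0}B_{1})$, and $\mathbf{J}_{AB_{0}}^{\Theta}=\mathbf{J}_{A_{0}B_{0}}^{\Theta}\otimes u_{A_{1}}$ is precisely the no-signaling statement that $\mathrm{tr}_{B_{1}}\circ\mathcal{Q}^{\Theta}=\mathcal{G}_{B_{0}\to A_{0}}\circ\mathrm{tr}_{A_{1}}$ for some channel $\mathcal{G}$. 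I would then take a minimal Stinespring dilation $\mathcal{G}(\cdot)=\mathrm{tr}_{E}\big[V(\cdot)V^{\dagger}\big]$ with $V\colon B_{0}\to A_{0}E$ an isometry and $|E|\leq|A_{0}B_{0}|$, and set $\mathcal{F}_{B_{0}\to EA_{0}}(\cdot):=V(\cdot)V^{\dagger}$. Since $V\otimes I_{A_{1}}$ is then a (minimal) Stinespring isometry of $\mathcal{G}\circ\mathrm{tr}_{A_{1}}$, and $\mathcal{Q}^{\Theta}$ is an extension of $\mathcal{G}\circ\mathrm{tr}_{A_{1}}$ to the additional output $B_{1}$, the uniqueness of Stinespring dilations up to a channel on the environment yields a channel $\mathcal{E}\in\mathrm{CPTP}(EA_{1}\to B_{1})$ with $\mathcal{Q}^{\Theta}=\mathcal{E}_{EA_{1}\to B_{1}}\circ\mathcal{F}_{B_{0}\to EA_{0}}$. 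Defining $\Theta'[\mathcal{N}_{A}]:=\mathcal{E}\circ\mathcal{N}\circ\mathcal{F}$, a direct Choi-calculus computation gives $\mathcal{Q}^{\Theta'}=\mathcal{E}\circ\mathcal{F}$, which by construction equals $\mathcal{Q}^{\Theta}$; since a supermap is determined by $\mathcal{Q}^{\Theta}$ (equivalently by $\mathbf{J}^{\Theta}$), we conclude $\Theta'=\Theta$, which is \eqref{realization0}, with $|E|\leq|A_{0}B_{0}|$ and $\mathcal{F}$ an isometry. The implication $(3)\Rightarrow(1)$ is immediate, because $\mathcal{E}\circ\mathcal{N}\circ\mathcal{F}$—and its tensor product with $\mathsf{id}_{R}$—is CPTP whenever $\mathcal{N}$ is; this also exhibits $\mathcal{Q}^{\Theta}=\mathcal{E}\circ\mathcal{F}$. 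Finally, $(2)\Leftrightarrow(4)$ follows once more from the fact that $\mathbf{J}_{AB}^{\Theta}$ is the Choi matrix of $\mathcal{R}^{\Theta}$: positivity of $\mathbf{J}^{\Theta}$ is equivalent to complete positivity of $\mathcal{R}^{\Theta}$, hence to $\mathcal{R}^{\Theta}$ sending the positive matrices $J_{A}^{\mathcal{N}}$ to positive operators, while $\mathrm{tr}_{B_{1}}\mathcal{R}^{\Theta}\big(J_{A}^{\mathcal{N}}\big)=I_{B_{0}}$ on the affinely spanning family $\{J_{A}^{\mathcal{N}}\}$ is, by the computation in the trace-preservation step, equivalent to \eqref{marginals}.

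The step I expect to be the main obstacle is $(2)\Rightarrow(3)$: one must correctly recognize the two Choi-marginal constraints as saying that $\mathcal{Q}^{\Theta}$ is a channel that does not signal from $A_{1}$ to $B_{1}$, and then extract the post-processing $\mathcal{E}$ from the pre-processing $\mathcal{F}$ via dilation uniqueness—after which the identification $\Theta'=\Theta$ is just Choi-matrix bookkeeping. The bound $|E|\leq|A_{0}B_{0}|$ and the fact that $\mathcal{F}$ may be taken to be an isometry both come for free from working with a \emph{minimal} Stinespring dilation of $\mathcal{G}$.
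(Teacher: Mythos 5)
The paper does not actually prove this theorem---it imports it from Refs.~\citep{Chiribella2008,Gour2018} (only the later uniqueness theorem leans on the construction, choosing $E$ to dilate $\mathbf{J}_{A_{0}B_{0}}^{\Theta}$, which matches your choice of a minimal Stinespring environment)---so I assess your argument on its own merits. The cycle $(1)\Leftrightarrow(2)$ and $(2)\Rightarrow(3)\Rightarrow(1)$ is sound and is the standard route: the identity $J_{B}^{\Theta\left[\mathcal{N}\right]}=\mathcal{R}_{A\to B}^{\Theta}\left(J_{A}^{\mathcal{N}}\right)$ does reduce the two superchannel axioms to $\mathbf{J}_{AB}^{\Theta}\geq0$ and to the affine/marginal computation you give; and in $(2)\Rightarrow(3)$ you correctly read \eqref{marginals} as saying that $\mathcal{Q}^{\Theta}\in\mathrm{CPTP}\left(A_{1}B_{0}\to A_{0}B_{1}\right)$ with $\mathrm{tr}_{B_{1}}\circ\mathcal{Q}^{\Theta}=\mathcal{G}_{B_{0}\to A_{0}}\circ\mathrm{tr}_{A_{1}}$, where $J^{\mathcal{G}}=\frac{1}{\left|A_{1}\right|}\mathbf{J}_{A_{0}B_{0}}^{\Theta}$, and since $V\otimes I_{A_{1}}$ is a \emph{minimal} Stinespring isometry of $\mathcal{G}\circ\mathrm{tr}_{A_{1}}$ whenever $V$ is minimal for $\mathcal{G}$, every dilation of $\mathcal{Q}^{\Theta}$ factors through it by an isometry on the environment, yielding the post-processing $\mathcal{E}$ after tracing out the surplus environment. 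The final identification $\Theta'=\Theta$ via $\mathcal{Q}^{\Theta'}=\mathcal{Q}^{\Theta}$ is indeed bookkeeping.

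The genuine gap is in $(4)\Rightarrow(2)$. You assert that positivity of $\mathbf{J}_{AB}^{\Theta}$ is equivalent to complete positivity of $\mathcal{R}^{\Theta}$, ``hence'' to $\mathcal{R}^{\Theta}$ sending the matrices $J_{A}^{\mathcal{N}}$ to positive operators. The second step is false: sending positive inputs to positive outputs is mere positivity, and here it is only even tested on the sub-cone $\left\{ X\geq0:\mathrm{tr}_{A_{1}}X\propto I_{A_{0}}\right\} $. Concretely, take $\left|A_{0}\right|=\left|B_{0}\right|=1$ and let $\Theta$ act as the transpose, so $\mathcal{R}^{\Theta}=T$: then $\mathcal{R}^{\Theta}\left(J_{A}^{\mathcal{N}}\right)=\rho^{T}$ is a legitimate Choi matrix (a state) for every state $\rho=J_{A}^{\mathcal{N}}$, and both identities in \eqref{marginals} hold, yet $\mathbf{J}_{AB}^{\Theta}$ is the swap operator, which is not positive semi-definite, and $\Theta$ is not a superchannel. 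So condition $(4)$, read literally without a reference system, cannot be upgraded to $(2)$ by your argument; one must either test $\mathcal{R}^{\Theta}$ on Choi matrices of bipartite channels $\mathcal{N}_{RA}$ with a spectator $R$---which restores complete positivity exactly as in your own treatment of the second axiom in step $(1)\Leftrightarrow(2)$---or supply a genuinely different argument. This is the one implication you dismiss as following ``once more'' from Choi bookkeeping, and it is precisely where the difficulty sits.
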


In general, the realization of a superchannel as given in Fig.~\ref{superchannel}
is not unique. This is due to the presence of a memory system, described
in Fig.~\ref{superchannel} with the letter $E$. To see why, consider
an isometry channel $\mathcal{V}_{E\to E'}$ defined for all $\rho\in\mathfrak{B}\left(E\right)$
by $\mathcal{V}_{E\to E'}\left(\rho\right)=V\rho_{E}V^{\dagger}$,
where $V:E\to E'$ is an isometry matrix satisfying $V^{\dagger}V=I_{E}$.
Then, this isometry channel has many left inverses given by 
\[
\mathcal{V}_{E'\to E}^{-1}\left(\sigma_{E'}\right)=V^{\dagger}\sigma_{E'}V+\mathrm{tr}\left[\left(I_{E'}-VV^{\dagger}\right)\sigma_{E'}\right]\tau_{E},
\]
where $\tau\in\mathfrak{D}\left(E\right)$ is an arbitrary fixed density
matrix. We can easily check that $\mathcal{V}^{-1}\circ\mathcal{V}=\mathsf{id}$.
In Fig.~\ref{realization} we use this map to show that the realization
of a superchannel in terms of pre- and post-processing is \emph{not}
unique.
\begin{figure}
\begin{centering}
\includegraphics[width=1\columnwidth]{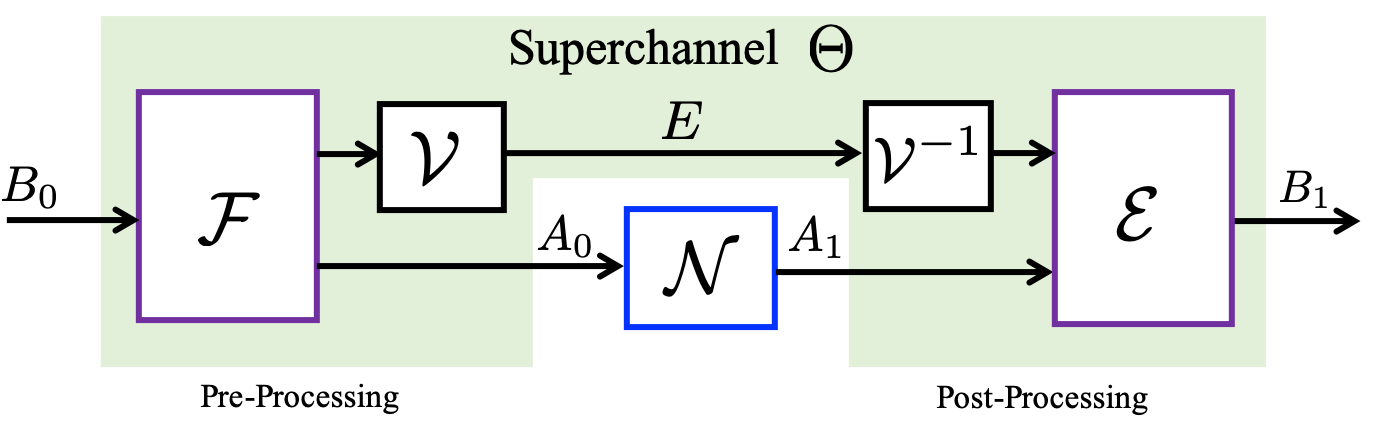}
\par\end{centering}
\caption{\label{realization}The realization of a superchannel is not unique.
The map $\mathcal{V}$ can be any linear map (not even a channel)
for which there exists another linear map $\mathcal{V}^{-1}$ such
that $\mathcal{V}^{-1}\circ\mathcal{V}=\mathsf{id}$. For example,
if $\mathcal{V}$ is an isometry channel that is not unitary, there
are many channels $\mathcal{V}^{-1}$ that satisfy $\mathcal{V}^{-1}\circ\mathcal{V}=\mathsf{id}$.
Note that one can even take $\mathcal{V}=\mathcal{V}^{-1}=\mathcal{T}$
to be the transpose map, in which case the resulting pre- and post-processing
are not even necessarily CP!}
\end{figure}
 Moreover, there is another way in which the realization of a superchannel
can be non-unique, namely by appending a state in the pre-processing,
and then discarding it in the post-processing. To see how this works,
let $\mathcal{F}_{B_{0}\to EA_{0}}$ and $\mathcal{E}_{EA_{1}\to B_{1}}$
be the pre-processing and the post-processing in a realization of
a superchannel $\Theta\in\mathfrak{S}\left(A\to B\right)$, respectively.
Now consider the new pre-processing $\mathcal{F}'_{B_{0}\to E'EA_{0}}:=\rho_{E'}\otimes\mathcal{F}_{B_{0}\to EA_{0}}$,
where $\rho\in\mathfrak{D}\left(E'\right)$, and the new post-processing
$\mathcal{E}_{A_{1}EE'\to B_{1}}:=\mathrm{tr}_{E'}\otimes\mathcal{E}_{EA_{1}\to B_{1}}$.
It is straightforward to check that $\mathcal{F}'$ and $\mathcal{E}'$
realize exactly the same superchannel $\Theta$, as $\mathcal{F}$
and $\mathcal{E}$.

Although the realization of a superchannel is not unique, if we restrict
the dimension of system $E$ to be the smallest possible, and the
map $\mathcal{F}$ to be an isometry, we can obtain a new uniqueness
result, expressed by the following theorem, which subsumes some of
the results in Ref.~\citep{Bisio-minimal}.
\begin{thm}[Uniqueness]
\label{thm:uniqueness}Let $\Theta\in\mathfrak{S}\left(A\to B\right)$
be a superchannel, and let $r:=\mathrm{Rank}\left(\mathbf{J}_{A_{0}B_{0}}^{\Theta}\right)$.
Then, there exists a system $E$ with $\left|E\right|=r$, an isometry
$\mathcal{F}\in\mathrm{CPTP}\left(B_{0}\to EA_{0}\right)$ and a channel
$\mathcal{E}\in\mathrm{CPTP}\left(EA_{1}\to B_{1}\right)$ such that
$\Theta$ can be realized as in Eq.~\eqref{realization0}. Furthermore,
if there exists a system $E'$ such that $\left|E'\right|\leq r$,
an isometry $\mathcal{F}'\in\mathrm{CPTP}\left(B_{0}\to E'A_{0}\right)$,
and a channel $\mathcal{E}'\in\mathrm{CPTP}\left(E'A_{1}\to B_{1}\right)$
such that $\Theta$ can be realized as in Eq.~\eqref{realization0}
with $\mathcal{F}'$ and $\mathcal{E}'$ replacing $\mathcal{E}$
and $\mathcal{F}$, then $\left|E'\right|=\left|E\right|$, and there
exists a unitary channel $\mathcal{U}\in\mathrm{CPTP}\left(E\to E'\right)$
such that 
\[
\mathcal{E}'_{E'A_{1}\to B_{1}}=\mathcal{E}_{EA_{1}\to B_{1}}\circ\mathcal{U}_{E'\to E}^{-1}
\]
and
\[
\mathcal{F}'_{B_{0}\to E'A_{0}}=\mathcal{U}_{E\to E'}\circ\mathcal{F}_{B_{0}\to EA_{0}}.
\]
\end{thm}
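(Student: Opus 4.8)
The plan is to deduce both the existence of the minimal realization and its essential uniqueness from the structure of the Choi matrix $\mathbf{J}_{AB}^{\Theta}$ together with the Stinespring-type freedom in channel dilations. First I would recall from theorem~\ref{premain} (items 2 and 3) that any realization factors the channel $\mathcal{Q}_{A_{1}B_{0}\to A_{0}B_{1}}^{\Theta}=\mathcal{E}_{EA_{1}\to B_{1}}\circ\mathcal{F}_{B_{0}\to EA_{0}}$, with $\mathcal{F}$ an isometry $V_{\mathcal{F}}:B_{0}\to EA_{0}$. The key observation is that the marginal $\mathbf{J}_{A_{0}B_{0}}^{\Theta}$ is precisely (up to transposition conventions) the Choi matrix of the channel $\mathcal{F}_{B_{0}\to A_{0}}:=\mathrm{tr}_{E}\circ\mathcal{F}_{B_{0}\to EA_{0}}$, obtained by tracing out $E$ and $A_{1}$ from the Choi matrix of $\Theta$; so $r=\mathrm{Rank}\left(\mathbf{J}_{A_{0}B_{0}}^{\Theta}\right)$ equals the Choi rank of $\mathcal{F}_{B_{0}\to A_{0}}$, i.e. the minimal dimension of an environment needed to dilate this CPTP map. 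Since $\mathcal{F}_{B_{0}\to EA_{0}}$ is an isometric dilation of $\mathcal{F}_{B_{0}\to A_{0}}$, it follows that $\left|E\right|\geq r$ in \emph{every} such realization, and that a realization with $\left|E\right|=r$ exists: take $V_{\mathcal{F}}$ to be the canonical Stinespring isometry of $\mathcal{F}_{B_{0}\to A_{0}}$ with minimal environment, and define $\mathcal{E}$ on the support appropriately so that $\mathcal{E}\circ\mathcal{F}=\mathcal{Q}^{\Theta}$ — such an $\mathcal{E}$ exists because $\mathcal{Q}^{\Theta}$ is CPTP and any isometry is left-invertible on its range, with the left inverse extendable to a channel.

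Next, for uniqueness, I would compare two realizations with environments $E$ and $E'$, both of dimension $\leq r$; by the lower bound just established, $\left|E\right|=\left|E'\right|=r$, so both $V_{\mathcal{F}}$ and $V_{\mathcal{F}'}$ are \emph{minimal} Stinespring dilations of the same channel $\mathcal{F}_{B_{0}\to A_{0}}$ (same because tracing out $E$ resp. $E'$ yields the same marginal $\mathbf{J}_{A_{0}B_{0}}^{\Theta}$, hence the same map). By uniqueness of the minimal Stinespring dilation, there is a unitary $U:E\to E'$ with $V_{\mathcal{F}'}=(U\otimes I_{A_{0}})V_{\mathcal{F}}$, i.e. $\mathcal{F}'=\mathcal{U}_{E\to E'}\circ\mathcal{F}$, which is the second displayed equation. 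For the post-processings, I would argue that on the range of $\mathcal{F}$ (equivalently, on $\mathrm{supp}\left(V_{\mathcal{F}}V_{\mathcal{F}}^{\dagger}\right)\subseteq EA_{1}$ tensored with $A_{1}$) both $\mathcal{E}$ and $\mathcal{E}'\circ\mathcal{U}_{E\to E'}$ must agree, since composing either with $\mathcal{F}$ gives the same channel $\mathcal{Q}^{\Theta}$ and $\mathcal{F}$ has full range on the relevant support. A subtlety is that $\mathcal{E}$ and $\mathcal{E}'$ are only pinned down on that support; but the claimed identity $\mathcal{E}'=\mathcal{E}\circ\mathcal{U}^{-1}$ with $\mathcal{U}^{-1}=\mathcal{U}^{\dagger}$ the inverse \emph{unitary} channel is an identity of full channels, not just on a subspace — so I would verify it by checking it holds on all of $E'A_{1}$. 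Here one uses that $\mathcal{U}$ is unitary (not merely an isometry into a larger space), so $\mathcal{U}_{E'\to E}^{-1}$ maps onto all of $E$; combined with the fact that the $A_{1}$ input is unconstrained and the $E$-marginal structure forces $\mathcal{E}$ to be determined on the full joint input, the identity extends globally.

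The main obstacle I anticipate is precisely this last point: showing that the post-processing is determined \emph{everywhere}, not just on the range of the pre-processing. In a generic factorization $\mathcal{Q}^{\Theta}=\mathcal{E}\circ\mathcal{F}$ one has freedom to modify $\mathcal{E}$ off $\mathrm{range}(\mathcal{F})$, which is exactly the non-uniqueness illustrated in Fig.~\ref{realization}. The resolution must use that $E$ is \emph{minimal}: when $\left|E\right|=r$, the isometry $V_{\mathcal{F}}:B_{0}\to EA_{0}$ has range whose partial trace over $A_{0}$ spans all of $\mathfrak{B}(E)$ (this is equivalent to minimality of the dilation), so that inputs to $\mathcal{E}$ of the form $V_{\mathcal{F}}(\cdot)V_{\mathcal{F}}^{\dagger}$ together with arbitrary $A_{1}$-inputs already span the whole operator space $\mathfrak{B}(EA_{1})$ — leaving no room to modify $\mathcal{E}$. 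I would make this span argument carefully (it is the crux of why minimality buys uniqueness), and then the two displayed equations follow by direct substitution. The remaining steps — constructing the minimal $\mathcal{E}$ in the existence part, and translating between Stinespring isometries and the channels $\mathcal{F}$, $\mathcal{E}$ — are routine dilation-theory bookkeeping.
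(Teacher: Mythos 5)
Your proposal is correct and follows essentially the same route as the paper: the paper identifies $\left|A_{1}\right|^{-1}\mathbf{J}_{A_{0}B_{0}}^{\Theta}$ as the state purified by the isometric pre-processing (equivalently, your minimal Stinespring dilation of $\mathrm{tr}_{E}\circ\mathcal{F}$), deduces $\left|E'\right|\geq r$ and the unitary relating the two pre-processings from uniqueness of purifications, and then pins down the post-processing globally using exactly the invertibility of the $E$-marginal that minimality guarantees. The only cosmetic difference is that the paper executes your ``span'' step as an explicit conjugation of the Choi-matrix identity by $U_{\widetilde{E}}^{\dagger}\psi_{\widetilde{E}}^{-1/2}$, which rests on the same fact.
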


\begin{proof}
The first part of the theorem follows from the proof of Theorem~\ref{premain}
as given in Ref.~\citep{Gour2018}, in which system $E$ was chosen
to be the purifying system of $\mathbf{J}_{A_{0}B_{0}}^{\Theta}$
(see also Ref.~\citep{Bisio-minimal}). Thus $\left|E\right|$ can
always be taken to have dimension $\left|E\right|=r$. We only need
to prove the uniqueness part.

First note that by Theorem~\ref{premain} we have that 
\begin{align*}
\mathcal{Q}_{A_{0}B_{1}\to A_{1}B_{0}}^{\Theta} & =\mathcal{E}'_{E'A_{1}\to B_{1}}\circ\mathcal{F}'_{B_{0}\to E'A_{0}}\\
 & =\mathcal{E}_{EA_{1}\to B_{1}}\circ\mathcal{F}_{B_{0}\to EA_{0}},
\end{align*}
whose Choi matrix is $\mathbf{J}_{AB}^{\Theta}$. Therefore, recalling
Eq.~\eqref{eq:def Q}, the marginal$\mathbf{J}_{A_{0}B_{0}}^{\Theta}$
can be expressed as 
\[
\mathbf{J}_{A_{0}B_{0}}^{\Theta}=\left|A_{1}\right|\mathcal{F}'_{B_{0}\to A_{0}}\left(\phi_{B_{0}\widetilde{B}_{0}}^{+}\right)=\left|A_{1}\right|\mathcal{F}_{B_{0}\to A_{0}}\left(\phi_{B_{0}\widetilde{B}_{0}}^{+}\right).
\]
Now, observe that $\left|A_{1}\right|\mathcal{F}'_{\widetilde{B}_{0}\to E'A_{0}}\left(\phi_{B_{0}\widetilde{B}_{0}}^{+}\right)$
is a purification of $\mathbf{J}_{A_{0}B_{0}}^{\Theta}$ since by
assumption $\mathcal{F}'_{\widetilde{B}_{0}\to E'A_{0}}$ is an isometry
(whence $\left|A_{1}\right|\mathcal{F}'_{\widetilde{B}_{0}\to E'A_{0}}\left(\phi_{B_{0}\widetilde{B}_{0}}^{+}\right)$
is a pure state). Therefore, $\left|E'\right|\geq r$ so that $\left|E'\right|=r=\left|E\right|$.
Moreover, since $\left|A_{1}\right|\mathcal{F}'_{\widetilde{B}_{0}\to E'A_{0}}\left(\phi_{B_{0}\widetilde{B}_{0}}^{+}\right)$
and $\left|A_{1}\right|\mathcal{F}_{\widetilde{B}_{0}\to EA_{0}}\left(\phi_{B_{0}\widetilde{B}_{0}}^{+}\right)$
are two purifications of $\mathbf{J}_{A_{0}B_{0}}^{\Theta}$, they
must be related by a unitary $\mathcal{U}_{E\to E'}$, so $\mathcal{F}'_{B_{0}\to E'A_{0}}=\mathcal{U}_{E\to E'}\circ\mathcal{F}_{B_{0}\to EA_{0}}$,
as their Choi matrices are the same.

To conclude the proof, set $\psi_{EA_{0}B_{0}}:=\mathcal{F}_{\widetilde{B}_{0}\to EA_{0}}\left(\phi_{B_{0}\widetilde{B}_{0}}^{+}\right)$.
Recalling that $\mathbf{J}_{AB}^{\Theta}$ is the Choi matrix of $\mathcal{Q}^{\Theta}$,
we get
\begin{align}
\mathbf{J}_{AB}^{\Theta} & =\mathcal{E}_{E\widetilde{A}_{1}\to B_{1}}\left(\psi_{EA_{0}B_{0}}\otimes\phi_{A_{1}\widetilde{A}_{1}}^{+}\right)\nonumber \\
 & =\mathcal{E}'_{E'\widetilde{A}_{1}\to B_{1}}\circ\mathcal{U}_{E\to E'}\left(\psi_{EA_{0}B_{0}}\otimes\phi_{A_{1}\widetilde{A}_{1}}^{+}\right).\label{eq:uniqueness equality}
\end{align}
Let system $\widetilde{E}$ be the support of $\psi_{A_{0}B_{0}}$
(i.e.\ it is the Hilbert space spanned by the eigenvectors of $\psi_{A_{0}B_{0}}$
that correspond to non-zero eigenvalues). Hence, $\left|\widetilde{E}\right|=\left|E\right|=r$.
Denoting the restriction of $\psi_{EA_{0}B_{0}}$ to the space $E\widetilde{E}$
by $\psi_{E\widetilde{E}}$, by Eq.~\eqref{eq:uniqueness equality}
we have that
\begin{align}
 & \mathcal{E}_{E\widetilde{A}_{1}\to B_{1}}\left(\psi_{E\widetilde{E}}\otimes\phi_{A_{1}\widetilde{A}_{1}}^{+}\right)\nonumber \\
 & =\mathcal{E}'_{E'\widetilde{A}_{1}\to B_{1}}\circ\mathcal{U}_{E\to E'}\left(\psi_{E\widetilde{E}}\otimes\phi_{A_{1}\widetilde{A}_{1}}^{+}\right)\label{eq:uniqueness equality2}
\end{align}
By definition, the marginal $\psi_{\widetilde{E}}$ is invertible,
and we have $\psi_{E\widetilde{E}}=\left(I_{E}\otimes\sqrt{\psi_{\widetilde{E}}}U_{\widetilde{E}}\right)\phi_{E\widetilde{E}}^{+}\left(I_{E}\otimes U_{\widetilde{E}}^{\dagger}\sqrt{\psi_{\widetilde{E}}}\right)$,
where $U_{\widetilde{E}}$ is some unitary. Hence, by (Hermite-) conjugating
both sides of Eq.~\eqref{eq:uniqueness equality2} above by $U_{\widetilde{E}}^{\dagger}\psi_{\widetilde{E}}^{-\frac{1}{2}}$,
we get that the Choi matrix of $\mathcal{E}_{E\widetilde{A}_{1}\to B_{1}}$
equals the Choi matrix of $\mathcal{E}'_{E'\widetilde{A}_{1}\to B_{1}}\circ\mathcal{U}_{E\to E'}$.
Consequently we conclude that the channels must be the same.
\end{proof}

\subsection{Measurements on quantum channels}

A quantum instrument is a collection of CP maps $\left\{ \mathcal{E}_{x}\right\} $
such that their sum $\sum_{x}\mathcal{E}_{x}$ is a CPTP map. Note
that each $\mathcal{E}^{x}$ is trace non-increasing, and that every
CP map that is trace non-increasing can be completed to a full quantum
instrument. Quantum instruments are used to characterize the most
general measurements that can be performed on a physical system, including,
as special cases, projective von Neumann measurements, POVMs, and
generalized measurements. Therefore, we discuss the generalization
of a quantum instrument to a collection of objects that act on quantum
channels. We call this generalization a\emph{ superinstrument} \citep{Supermeasurements}.

A superinstrument is a collection of supermaps $\left\{ \Theta_{x}\right\} $,
where each $\Theta_{x}\in\mathfrak{L}\left(A\to B\right)$ is CP (i.e.\ $\mathbf{J}_{AB}^{\Theta_{x}}\geq0$),
and the sum $\sum_{x}\Theta_{x}$ is a superchannel. Similar to the
state domain, every $\Theta_{x}$ maps quantum channels to CP trace
non-increasing maps. However, in the channel domain not every supermap
$\Theta\in\mathfrak{L}\left(A\to B\right)$ with a positive semi-definite
Choi matrix, and that takes channels to CP trace non-increasing maps,
can be completed to a superchannel. In Ref.~\citep{Supermeasurements}
a counterexample was given, and it was also shown that a CP supermap
$\Theta\in\mathfrak{L}\left(A\to B\right)$ can be completed to a
superchannel (i.e.\ there exists a CP supermap $\Omega\in\mathfrak{L}\left(A\to B\right)$
such that $\Theta+\Omega$ is a superchannel) if and only if for any
system $R$, the supermap $\mathbbm{1}_{R}\otimes\Theta$ takes quantum
channels to CP trace non-increasing maps. In Ref.~\citep{Supermeasurements}
it was shown that this phenomenon is associated with the existence
of signaling bipartite channels.

While the above discussion is subtle, it demonstrates (see details
in Ref.~\citep{Supermeasurements}) that every element $\Theta_{x}$
of a superinstrument $\left\{ \Theta_{x}\right\} $ satisfies $\mathrm{tr}\left[\mathbf{J}_{AB_{0}}^{\Theta_{x}}\alpha_{AB_{0}}\right]\leq1$
for every $\alpha_{AB_{0}}\geq0$ such that $\alpha_{A_{0}B_{0}}=I_{A_{0}}\otimes\rho_{B_{0}}$,
where $\rho\in\mathfrak{D}\left(B_{0}\right)$. Moreover, every superinstrument
can be realized as in Fig.~\ref{instrument}, with an isometry pre-processing
and a quantum instrument as the post-processing \citep{Chiribella2008,Supermeasurements}.
\begin{figure}
\begin{centering}
\includegraphics[width=1\columnwidth]{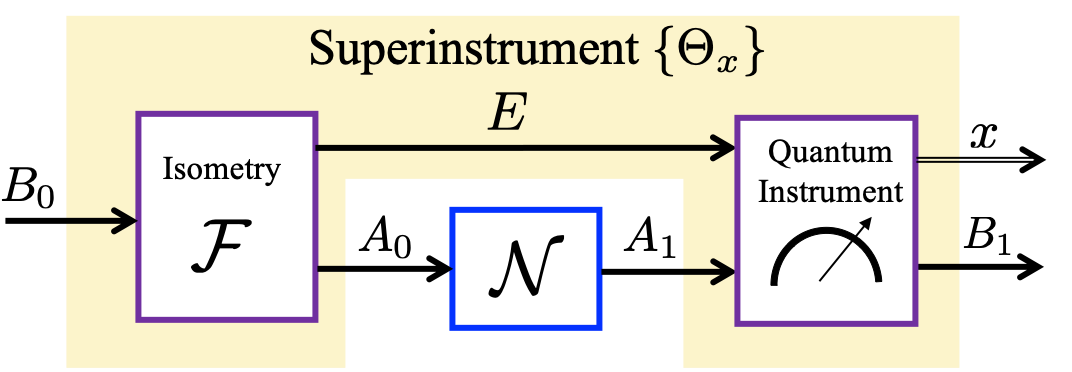}
\par\end{centering}
\caption{\label{instrument}The realization of a superinstrument. The map $\mathcal{F}$
can be taken to be an isometry and the post-processing is a quantum
instrument.}
\end{figure}
 Like quantum instruments, any superinstrument $\left\{ \Theta_{x}\right\} $
in $\mathfrak{L}\left(A\to B\right)$ can be viewed as a superchannel
$\Theta\in\mathfrak{L}\left(A\to BX\right)$, where system $X=\left(X_{0},X_{1}\right)$
has trivial input dimension $\left|X_{0}\right|=1$, and the output
system $X_{1}$ is classical. Hence, a superinstrument can be expressed
as 
\[
\Theta_{A\to BX}=\sum_{x}\Theta_{A\to B}^{x}\otimes\left|x\right\rangle \left\langle x\right|_{X},
\]
where $X\equiv X_{1}$. This characterization of a superinstrument
is particularly useful in the context of quantum resource theories,
since the above relation demonstrates that the set of free superinstruments
can be viewed as a subset of the set of free superchannels.

\subsection{Quantum combs}

Quantum combs are multipartite channels with a well-defined causal
structure (see Fig.~\ref{comb}(a)) \citep{Gutoski,Circuit-architecture,Hierarchy-combs,Gutoski2,Chiribella2016,Gutoski3}.
\begin{figure}
\begin{centering}
\includegraphics[width=1\columnwidth]{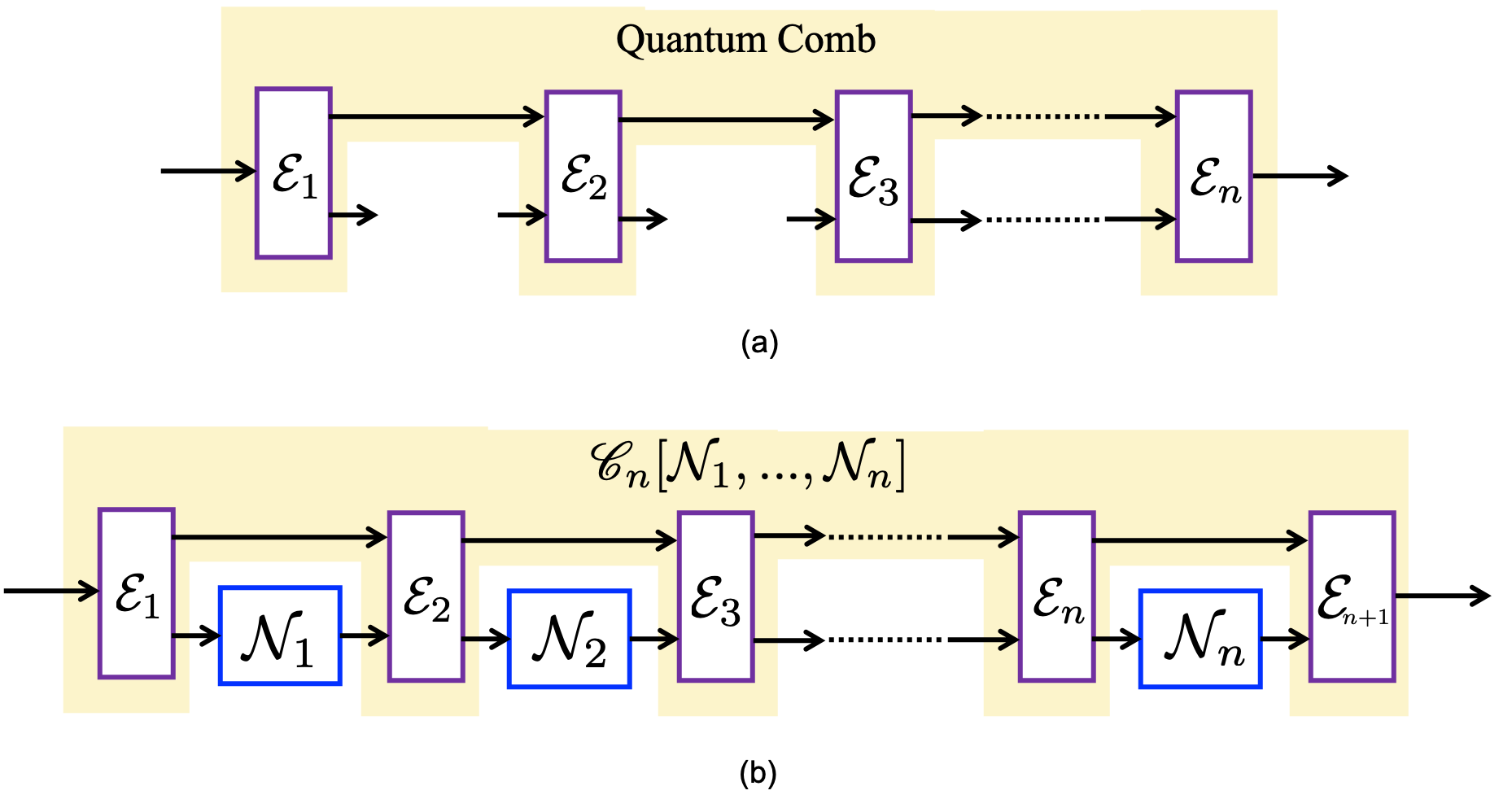}
\par\end{centering}
\caption{\label{comb}(a) A quantum comb that can be realized with $n$ channels.
(b) The action of $\mathscr{C}_{n}$ on $n$ channels $\mathcal{N}_{1},\dots,\mathcal{N}_{n}$.
Note that the input channels are\emph{ causally ordered} in the slots
of the comb from left to right, i.e.\ from $\mathcal{N}_{1}$ to
$\mathcal{N}_{n}$.}
\end{figure}
 They generalize the notion of superchannels to objects that take
several channels as input, and output a channel (see Refs.~\citep{Circuit-architecture,Hierarchy-combs}
for more details, and a for a further generalization where the input
and the output of combs are combs themselves). A comb acting on $n$
channels is depicted in Fig.~\ref{comb}(b). We will denote a comb
with $n$ channel-slots as input by $\mathscr{C}_{n}$, and its action
on $n$ channels by $\mathscr{C}_{n}\left[\mathcal{N}_{1},\dots,\mathcal{N}_{n}\right]$.
The causal relation between the different slots ensures that each
such comb can be realized with $n+1$ channels $\mathcal{E}_{1},\dots,\mathcal{E}_{n+1}$
as in Fig.~\ref{comb}(b). We therefore associate a quantum channel
\[
\mathcal{Q}^{\mathscr{C}_{n}}:=\mathcal{E}_{n+1}\circ\mathcal{E}_{n}\circ\dots\circ\mathcal{E}_{1}
\]
with every comb. Note that the quantum channel $\mathcal{Q}^{\mathscr{C}_{n}}$
has a causal structure in the sense that the input to $\mathcal{E}_{k}$
cannot affect the output of $\mathcal{E}_{k-1}$ for any $k=2,\dots,n+1$.
The Choi matrix of the comb is defined as the Choi matrix of $\mathcal{Q}^{\mathscr{C}_{n}}$.
Owing to the causal structure of $\mathcal{Q}^{\mathscr{C}_{n}}$,
the marginals of the Choi matrix of $\mathscr{C}_{n}$ satisfy similar
relations to Eq.~\eqref{marginals} (see Refs.~\citep{Circuit-architecture,Hierarchy-combs}
for more details).

Note that there are other ways to manipulate multiple quantum channels
where we do not require any causal structure on the different channel-slots
\citep{Switch,Process-matrix,Giarmatzi}, but we will not use them
in our analysis.

\subsection{Quantum resource theories}

For every pair of physical systems $A$ and $B$, consider a subset
of CPTP maps $\mathfrak{F}\left(A\to B\right)\subset\mathrm{CPTP}\left(A\to B\right)$.
$\mathfrak{F}$ identifies a \emph{quantum resource theory} (QRT)
if the following two conditions hold \citep{Gour-review}:
\begin{enumerate}
\item For every physical system $A$, the set $\mathfrak{F}\left(A\to A\right)$
contains the identity map $\mathsf{id}_{A}$.
\item For any three systems $A$, $B$, $C$, if $\mathcal{M}\in\mathfrak{F}\left(A\to B\right)$
and $\mathcal{N}\in\mathfrak{F}\left(B\to C\right)$, then $\mathcal{N}\circ\mathcal{M}\in\mathfrak{F}\left(A\to C\right)$.
\end{enumerate}
The elements in each set $\mathfrak{F}\left(A\to B\right)$ are called
\emph{free operations}. The set $\mathfrak{F}\left(A\right):=\mathfrak{F}\left(1\to A\right)$,
where the $1$ stands for the trivial (i.e.\ 1-dimensional) system,
will be used to denote the set of free states. 

In any QRT we can consider either static or dynamical inter-conversions.
In a static inter-conversion we look for conditions under which a
conversion from one resource state (i.e.\ not in $\mathfrak{F}\left(A\right)$)
to another is possible by free operations. In a dynamical inter-conversion
we are interested in the conditions under which a conversion from
one resource channel (i.e.\ not in $\mathfrak{F}\left(A\to B\right)$)
to another is possible with \emph{free superchannels}. Clearly, static
inter-conversions can be viewed as a special type of dynamical ones. 

In this article we will consider QRTs that admit a tensor product
structure. That is, the set of free operations $\mathfrak{F}$ satisfies
the following additional conditions:
\begin{enumerate}
\item[3.] Free operations are ``completely free'': for any three physical
systems $A$, $B$, and $C$, if $\mathcal{M}\in\mathfrak{F}\left(A\to B\right)$
then $\mathsf{id}_{C}\otimes\mathcal{M}\in\mathfrak{F}\left(CA\to CB\right)$. 
\item[4.] Discarding a system (i.e.\ the trace) is a free operation: for every
system $A$, the set $\mathfrak{F}\left(A\to1\right)$ is not empty.
\end{enumerate}
The above additional conditions are very natural, and satisfied by
almost all QRTs studied in literature \citep{Gour-review}. These
conditions imply that if $\mathcal{M}_{1}$ and $\mathcal{M}_{2}$
are free channels, then also $\mathcal{M}_{1}\otimes\mathcal{M}_{2}$
is free. In addition, they also imply that appending free states is
a free operation; i.e.\ for any given free state $\sigma\in\mathfrak{F}\left(B\right)$,
the CPTP map $\mathcal{F}_{\sigma}\left(\rho\right):=\rho\otimes\sigma$
is a free map, i.e.\ it belongs to $\mathfrak{F}\left(A\to AB\right)$.
This in turn implies that the replacement map $\mathcal{R}_{\sigma}$
is free, where $\mathcal{R}_{\sigma}\left(\rho\right)=\mathrm{tr}\left[\rho\right]\sigma$,
for every density matrix $\rho$, and some fixed free state $\sigma$.
In the following we will also assume that $\mathfrak{F}\left(A\to B\right)$
is topologically closed for all systems $A$ and $B$, as it is natural
to assume that arbitrarily good approximations of free operations
are free as well.

\section{Resource theories of quantum processes\label{sec:Resource-processes}}

In this section we build resource theories of processes, and we present
a new construction of a complete set of monotones for convex resource
theories of processes. We also give the precise definition of several
resource-theoretic protocols.

Similarly to resource theories of quantum states, free superchannels
will be a \emph{subset} of all physical superchannels. If we already
have a QRT of static resources, theorem~\ref{premain} gives us a
\emph{sufficient condition} for free superchannels: a superchannel
is free if both the pre-processing and the post-processing are free
in the underlying resource theory of states, i.e.\ if $\mathcal{F}\in\mathfrak{F}\left(B_{0}\to EA_{0}\right)$
and $\mathcal{E}\in\mathfrak{F}\left(EA_{1}\to B_{1}\right)$. We
call these free superchannels ``freely realizable''. Since we consider
QRTs with a tensor product structure, if a superchannel $\Theta$
is free, then also its map
\begin{equation}
\mathcal{Q}_{A_{1}B_{0}\to A_{0}B_{1}}^{\Theta}:=\mathcal{E}_{EA_{1}\to B_{1}}\circ\mathcal{F}_{B_{0}\to EA_{0}}\label{decompose}
\end{equation}
is free: $\mathcal{Q}_{A_{1}B_{0}\to A_{0}B_{1}}^{\Theta}\in\mathfrak{F}\left(A_{1}B_{0}\to A_{0}B_{1}\right)$.
Recall that the mapping $\Theta\mapsto\mathcal{Q}_{A_{1}B_{0}\to A_{0}B_{1}}^{\Theta}$
is a bijection, so that a free superchannel $\Theta$ corresponds
to a unique free map $\mathcal{Q}_{A_{1}B_{0}\to A_{0}B_{1}}^{\Theta}$.
However, if $\mathcal{Q}_{A_{1}B_{0}\to A_{0}B_{1}}^{\Theta}$ is
a free CPTP map, it does \emph{not} necessarily mean that there exists
a realization of $\Theta$ in terms of free pre- and post-processing:
we only know that their combination is free.

The problem of determining whether a free channel $\mathcal{Q}$ can
be decomposed as in Eq.~\eqref{decompose} with both $\mathcal{E}$
and $\mathcal{F}$ being free can be very hard to solve, even when
the resource theory is relatively simple (that is, even if inclusion
in $\mathfrak{F}$ can be determined with an SDP; e.g.\ in NPT entanglement,
see Ref.~\citep{Dynamical-entanglement}). Therefore, typically,
resource theories of quantum processes can be very hard to handle,
even if the corresponding QRT of states is relatively simple. In Ref.~\citep{Dynamical-entanglement},
we announced that, for NPT dynamical entanglement, if we enlarge the
set of free superchannels to include \emph{all} superchannels for
which $\mathcal{Q}$ is a PPT channel, we obtain a resource theory
of NPT dynamical entanglement that is much more manageable. The price
we pay is that not all such free superchannels may be freely realizable.

In view of this more relaxed definition of free superchannels, let
us focus on the minimal requirements for free superchannels. For any
two systems $A$ and $B$, we denote by $\mathrm{FREE}\left(A\to B\right)$
the set of all free superchannels in $\mathfrak{S}\left(A\to B\right)$.
The minimal requirements the set $\mathrm{FREE}$ must satisfy are
the following (analogous to those satisfied by $\mathfrak{F}$):
\begin{enumerate}
\item $\mathbbm{1}_{A}\in\mathrm{FREE}\left(A\to A\right)$, where $\mathbbm{1}_{A}$
is the identity supermap acting on $\mathfrak{L}\left(A\to A\right)$.
\item If $\Theta_{1}\in\mathrm{FREE}\left(A\to B\right)$ and $\Theta_{2}\in\mathrm{FREE}\left(B\to C\right)$,
then $\Theta_{2}\circ\Theta_{1}\in\mathrm{FREE}\left(A\to C\right)$.
\end{enumerate}
In particular, the second condition also implies that the superchannels
in $\mathrm{FREE}$ are resource non-generating (RNG) \citep{Quantum-resource-2,Gour-review}.
In other words, for every input channel $\mathcal{M}_{A}\in\mathfrak{F}\left(A_{0}\to A_{1}\right)$
and every free superchannel $\Theta\in\mathrm{FREE}\left(A\to B\right)$,
the output channel $\Theta\left[\mathcal{M}_{A}\right]\in\mathfrak{F}\left(B_{0}\to B_{1}\right)$.
Note that we can recover free channels by trivializing the input $A$
of a free superchannel $\Theta_{A\rightarrow B}$, i.e.\ by taking
$A_{0}$ and $A_{1}$ to be 1-dimensional.

Moreover, since we consider QRTs that admit a tensor product structure,
we require free superchannels to be ``completely free'': for any
three physical systems $A=\left(A_{0},A_{1}\right)$, $B=\left(B_{0},B_{1}\right)$,
and $R=\left(R_{0},R_{1}\right)$, if $\Theta\in\mathrm{FREE}\left(A\to B\right)$,
then $\mathbbm{1}_{R}\otimes\Theta\in\mathrm{FREE}\left(RA\to RB\right)$.

Note that appending free channels is a free operation: it is the tensor
product of the identity superchannel with a free channel. Therefore,
for any given free channel $\mathcal{M}_{B}\in\mathfrak{F}\left(B_{0}\to B_{1}\right)$,
the superchannel $\Theta_{\mathcal{M}}\left[\mathcal{N}_{A}\right]:=\mathcal{N}_{A}\otimes\mathcal{M}_{B}$
is a free superchannel, i.e.\ it belongs to $\mathrm{FREE}\left(A\to AB\right)$.

In some important resource theories, e.g.\ in entanglement theory
\citep{LOCC1,LOCC2,Lo-Popescu,Chitambar2014}, the set of natural
free operations can be hard to characterize mathematically \citep{NP-hard1,NP-hard2}.
For this reason, it can be convenient to enlarge the set of free operations
to work with a less complicated set. A standard enlargement is to
consider \emph{all} resource non-generating (RNG) superchannels \citep{Quantum-resource-2,Gour-review}:
\begin{align*}
 & {\rm RNG}\left(A\to B\right)\\
 & :=\left\{ \Theta\in\mathfrak{S}\left(A\to B\right):\Theta\left[\mathcal{M}_{A}\right]\in\mathfrak{F}\left(B_{0}\to B_{1}\right)\right\} ,
\end{align*}
for all $\mathcal{M}_{A}\in\mathfrak{F}\left(A_{0}\to A_{1}\right)$.
Similarly to the case of states, this is the set of superchannels
that transform free channels into free channels. In this setting,
since we require free superchannels to be completely free, RNG superchannels
are also \emph{completely resource non-generating} (CRNG) (in general,
however, they are two distinct sets, with $\mathrm{CRNG}\subseteq\mathrm{RNG}$):
$\Theta$ is CRNG if and only if $\mathbbm{1}_{R}\otimes\Theta$ is
RNG, for all systems $R=\left(R_{0},R_{1}\right)$. In Ref.~\citep{Dynamical-entanglement}
we consider PPT operations \citep{PPT1,PPT2} and separable operations
\citep{SEP} as extensions of the LOCC paradigm. Both of these sets
are CRNG. Note that, however, a priori, there is no guarantee that
CRNG superchannels are freely realizable in terms of CRNG channels
in the underlying resource theory of states.

Dynamical resources are quantified by dynamical resource monotones.
\begin{defn}
Let $\mathfrak{F}$ be a QRT admitting a tensor product structure.
Let $f:\mathrm{CPTP}\to\mathbb{R}$ be a function on the set of all
channels in all dimensions. Then, $f$ is called a \emph{dynamical
resource monotone} if, for every channel $\mathcal{N}\in\mathrm{CPTP}\left(A_{0}\to A_{1}\right)$
and every superchannel $\Theta\in\mathrm{FREE}\left(A\to B\right)$,
$f\left(\Theta\left[\mathcal{N}_{A}\right]\right)\leq f\left(\mathcal{N}_{A}\right).$
\end{defn}

It is customary, although not essential, to request that, for any
system $A_{0}$, the value of $f$ on the identity channel $\mathsf{id}_{A_{0}}$
is zero; i.e.\ $f\left(\mathsf{id}_{A_{0}}\right)=0$. This condition
implies that $f$ is non-negative, and satisfies 
\begin{equation}
f\left(\mathcal{N}_{A}\right)=0\quad\forall\mathcal{N}\in\mathfrak{F}\left(A_{0}\to A_{1}\right),\label{eq:zero monotone}
\end{equation}
for every system $A=\left(A_{0},A_{1}\right)$. The above property
follows from a combination of the monotonicity property of $f$ with
the fact that the replacement superchannel that takes any channel
to a fixed free channel is itself a free superchannel, as it can be
realized with free pre- and post-processing. Applying the replacement
superchannel preparing $\mathcal{N}\in\mathfrak{F}\left(A_{0}\to A_{1}\right)$
to the identity superchannel, we get $f\left(\mathcal{N}\right)\leq f\left(\mathsf{id}_{A_{0}}\right)=0$.
Applying the replacement superchannel preparing the identity channel
to $\mathcal{N}$ instead yields $f\left(\mathcal{N}\right)\geq f\left(\mathsf{id}_{A_{0}}\right)=0$,
whence Eq.~\eqref{eq:zero monotone} follows.

Examples of dynamical monotones that are given in terms of the relative
entropy were discussed in Refs.~\citep{Pirandola-LOCC,Resource-channels-1,Resource-channels-2,Gour-Winter,Wilde-entanglement}.
One such example is defined in terms of the \emph{channel divergence}
\citep{Cooney2016,Datta,Gour2018}. Given two channels $\mathcal{N},\mathcal{E}\in\mathrm{CPTP}\left(A_{0}\to A_{1}\right)$,
the channel divergence is
\[
D\left(\mathcal{N}_{A}\middle\|\mathcal{E}_{A}\right):=\sup_{\psi_{RA_{0}}}D\left(\mathcal{N}_{A}\left(\psi_{RA_{0}}\right)\middle\|\mathcal{E}_{A}\left(\psi_{RA_{0}}\right)\right)
\]
where $D\left(\rho\middle\|\sigma\right):=\mathrm{tr}\left[\rho\log\rho\right]-\mathrm{tr}\left[\rho\log\sigma\right]$
is the relative entropy, $R$ is a reference system, and the supremum
is over all $\left|R\right|$ and all density matrices $\psi_{RA_{0}}\in\mathfrak{D}\left(RA_{0}\right)$.
In Refs.~\citep{Cooney2016,Datta,Gour2018} it was argued that the
supremum can be replaced with a maximum, $R$ can be taken to have
the same dimension as $A_{0}$, and $\psi_{RA_{0}}$ can be taken
to be pure. The relative entropy of a dynamical resource $\mathcal{N}\in\mathrm{CPTP}\left(A_{0}\to A_{1}\right)$
is defined as
\[
D_{\mathfrak{F}}\left(\mathcal{N}_{A}\right):=\min_{\mathcal{E}\in\mathfrak{F}\left(A_{0}\to A_{1}\right)}D\left(\mathcal{N}_{A}\middle\|\mathcal{E}_{A}\right).
\]

There is also a way to elevate any static monotone into a dynamical
monotone. Given a static monotone $E$, define
\[
E\left(\mathcal{N}_{A}\right):=\sup_{\sigma\in\mathfrak{D}\left(RA_{0}\right)}E\left(\mathcal{N}_{A}\left(\sigma_{RA_{0}}\right)\right)-E\left(\sigma_{RA_{0}}\right),
\]
for any $\mathcal{N}\in\mathrm{CPTP}\left(A_{0}\to A_{1}\right)$.
Then, it can be shown that $E$ is non-increasing under CRNG superchannels
\citep{Resource-channels-1,Resource-channels-2,Gour-Winter}. This
was called \emph{amortized extension} in Ref.~\citep{Gour-Winter}.
This definition captures the generating power of the channel $\mathcal{N}$,
understood as the maximum amount of static resource $\mathcal{N}$
can generate.

\subsection{A complete family of dynamical monotones\label{subsec:mono}}

The examples of dynamical monotones presented in the previous subsection
are typically very hard to compute due to the optimizations involved.
Here for the first time we introduce a family of dynamical resource
montones for convex resource theories that in some cases (e.g.\ NPT
entanglement, Ref.~\citep{Dynamical-entanglement}) can be computed
with SDPs. Furthermore, each member of the family is convex, and the
family itself is \emph{complete}, in the sense that the monotones
provide both \emph{necessary and sufficient} conditions for the conversion
of a dynamical resource into another with free superchannels. In this
sense, this family of monotones fully captures the resourcefulness
of a dynamical resource.

An example of a complete family of static resource monotones is known
for pure-state entanglement theory \citep{Nielsen,Vidal,Jonathan}.
There, the family of entanglement monotones is given in terms of Ky-Fan
norms, and due to Nielsen majorization theorem \citep{Nielsen}, this
family provides both necessary and sufficient conditions for the convertibility
of pure bipartite states. The fact that the family consists of a \emph{finite}
number of monotones makes it easy to determine the convertibility
of bipartite pure states under LOCC. However, for mixed states it
is known that, already in local dimension 4, a finite number of monotones
is insufficient to fully determine the exact interconversions between
bipartite mixed states \citep{Gour-infinite}. Therefore, in general,
one cannot expect to find a finite and complete family for a generic
QRT. 
\begin{thm}
\label{thm:monotones}Let $\mathrm{FREE}\left(A\to B\right)$ be as
above, such that for every two systems $A=\left(A_{0},A_{1}\right)$
and $B=\left(B_{0},B_{1}\right)$, the set $\mathrm{FREE}\left(A\to B\right)$
is \emph{convex} and topologically closed. For any quantum channel
$\mathcal{P}_{B}\in\mathrm{CPTP}\left(B_{0}\to B_{1}\right)$ define
\[
f_{\mathcal{P}}\left(\mathcal{N}_{A}\right):=\max_{\Theta\in\mathrm{FREE}\left(A\to B\right)}\left\langle \mathcal{P}_{B},\Theta\left[\mathcal{N}_{A}\right]\right\rangle ,
\]
for every $\mathcal{N}_{A}\in\mathrm{CPTP}\left(A_{0}\to A_{1}\right)$.
Let $\mathcal{N}_{A}\in\mathrm{CPTP}\left(A_{0}\to A_{1}\right)$
and $\mathcal{M}_{B}\in\mathrm{CPTP}\left(B_{0}\to B_{1}\right)$
be two quantum channels. Then, $\mathcal{M}_{B}=\Theta_{A\to B}\left[\mathcal{N}_{A}\right]$,
for some superchannel $\Theta\in\mathrm{FREE}\left(A\to B\right)$
if and only if 
\[
f_{\mathcal{P}}\left(\mathcal{N}_{A}\right)\geq f_{\mathcal{P}}\left(\mathcal{M}_{B}\right)\quad\forall\mathcal{P}\in\mathrm{CPTP}\left(B_{0}\to B_{1}\right).
\]
\end{thm}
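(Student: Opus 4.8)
The plan is to establish each implication separately, with the "only if" direction being routine and the "if" direction requiring the main work. For the \emph{only if} direction, suppose $\mathcal{M}_B = \Theta_0[\mathcal{N}_A]$ for some $\Theta_0 \in \mathrm{FREE}(A\to B)$. Fix any $\mathcal{P} \in \mathrm{CPTP}(B_0\to B_1)$ and any system $C=(C_0,C_1)$, and let $\Theta \in \mathrm{FREE}(B\to C)$ be arbitrary. Then $\Theta\circ\Theta_0 \in \mathrm{FREE}(A\to C)$ by the composition property of $\mathrm{FREE}$, so $\langle \mathcal{P}_C, \Theta[\mathcal{M}_B]\rangle = \langle \mathcal{P}_C, (\Theta\circ\Theta_0)[\mathcal{N}_A]\rangle \leq f_{\mathcal{P}}(\mathcal{N}_A)$. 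Taking the maximum over $\Theta \in \mathrm{FREE}(B\to C)$ gives $f_{\mathcal{P}}(\mathcal{M}_B) \leq f_{\mathcal{P}}(\mathcal{N}_A)$. (Here I should double-check that the $\mathcal{P}$ ranging over $\mathrm{CPTP}(C_0\to C_1)$ in the statement is exactly what is needed; the theorem as phrased fixes $B$ as the output system, so one simply takes $C=B$ and $\Theta_0$ as given.)

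For the \emph{if} direction, the idea is a separating-hyperplane argument. Consider the set
\[
\mathcal{K} := \left\{\, \Theta[\mathcal{N}_A] : \Theta \in \mathrm{FREE}(A\to B)\,\right\} \subseteq \mathrm{CPTP}(B_0\to B_1).
\]
Since $\mathrm{FREE}(A\to B)$ is convex and topologically closed, and the map $\Theta \mapsto \Theta[\mathcal{N}_A]$ is linear and continuous, $\mathcal{K}$ is a convex compact subset of the finite-dimensional space $\mathrm{Herm}(B_0\to B_1)$ (compactness because $\mathrm{FREE}(A\to B)$ is a closed subset of the compact set $\mathfrak{S}(A\to B)$). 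Now suppose, for contradiction, that $\mathcal{M}_B \notin \mathcal{K}$. By the Hahn--Banach separation theorem, there is a Hermitian-preserving map $\mathcal{P}$ (an element of the dual space, identified via the inner product on maps) and a real number $c$ such that $\langle \mathcal{P}_B, \mathcal{M}_B\rangle > c \geq \langle \mathcal{P}_B, \Theta[\mathcal{N}_A]\rangle$ for all $\Theta \in \mathrm{FREE}(A\to B)$. The main obstacle, and the part requiring care, is upgrading this separating functional $\mathcal{P}$ from an arbitrary Hermitian-preserving map to an actual quantum channel $\mathcal{P} \in \mathrm{CPTP}(B_0\to B_1)$, since the monotones $f_{\mathcal{P}}$ in the statement are indexed only by channels.

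To handle this obstacle I would exploit the affine structure of $\mathrm{CPTP}(B_0\to B_1)$ together with the fact that $\mathrm{FREE}$ contains the replacement superchannels (hence $\mathcal{K}$ always contains at least one free channel, namely any $\mathcal{F}\in\mathfrak{F}(B_0\to B_1)$ if the underlying static theory is nonempty on $B$, or more simply $\mathcal{N}$ itself composed with a free superchannel). The key point is that both $\mathcal{K}$ and $\{\mathcal{M}_B\}$ lie inside the affine hyperplane of trace-preserving Hermitian-preserving maps (equivalently, Choi matrices with the correct first marginal $I_{B_0}$). One can therefore first separate within this affine subspace, obtaining a functional determined only up to adding multiples of the linear functional $\mathcal{N}\mapsto \mathrm{tr}_{B_1}[J_B^{\mathcal{N}}]$ paired against anything --- concretely, $\mathcal{P}$ is determined up to adding $X_{B_0}\otimes(\text{something})$, i.e.\ up to the freedom $\langle \mathcal{P}, \cdot\rangle \mapsto \langle \mathcal{P}, \cdot\rangle + \mathrm{tr}[X_{B_0} \,\mathrm{tr}_{B_1}(J^{\cdot}_B)]$ for arbitrary Hermitian $X_{B_0}$. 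Since every channel in $\mathcal{K}\cup\{\mathcal{M}_B\}$ has the same value $\mathrm{tr}_{B_1}(J^{\cdot}_B)=I_{B_0}$, this added term is a constant, so I may adjust $\mathcal{P}$ by such a term without changing any of the relevant inner products. Choosing $X_{B_0} = \lambda I_{B_0}$ for $\lambda$ large enough shifts $J^{\mathcal{P}}_B$ by $\lambda I_{B_0 B_1}$, making it positive semidefinite; rescaling and re-normalizing the trace (again a constant shift on our affine set, absorbed the same way) turns $\mathcal{P}$ into a genuine element of $\mathrm{CPTP}(B_0\to B_1)$. With this $\mathcal{P}$, the separation inequality reads $f_{\mathcal{P}}(\mathcal{M}_B) = \langle \mathcal{P}_B,\mathcal{M}_B\rangle > \max_{\Theta} \langle \mathcal{P}_B, \Theta[\mathcal{N}_A]\rangle = f_{\mathcal{P}}(\mathcal{N}_A)$ --- here using that $f_{\mathcal{P}}(\mathcal{M}_B) \geq \langle\mathcal{P}_B, \mathbbm{1}_B[\mathcal{M}_B]\rangle = \langle\mathcal{P}_B,\mathcal{M}_B\rangle$ and combining with the already-proven monotonicity-flavored inequality $f_{\mathcal{P}}(\mathcal{M}_B)\le\langle\mathcal{P}_B,\mathcal{M}_B\rangle$ coming from the "only if" argument applied with the trivial free superchannel, so in fact $f_{\mathcal{P}}(\mathcal{M}_B)=\langle\mathcal{P}_B,\mathcal{M}_B\rangle$. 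This contradicts the hypothesis $f_{\mathcal{P}}(\mathcal{N}_A)\geq f_{\mathcal{P}}(\mathcal{M}_B)$, completing the argument. Finally, I would note that $f_{\mathcal{P}}$ is indeed a dynamical monotone (immediate from the composition closure of $\mathrm{FREE}$, as in the "only if" step) and convex in $\mathcal{N}_A$ (as a maximum of linear functionals of $\mathcal{N}_A$), so the family genuinely consists of convex dynamical monotones.
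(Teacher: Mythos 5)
Your proof follows essentially the same route as the paper's: a separating-hyperplane argument on the closed convex orbit $\left\{ \Theta\left[\mathcal{N}_{A}\right]:\Theta\in\mathrm{FREE}\left(A\to B\right)\right\}$, followed by upgrading the Hermitian-preserving separating functional to a quantum channel by exploiting that every Choi matrix in play has the same marginal $J_{B_{0}}=I_{B_{0}}$, so that the modification changes all relevant inner products only by an additive constant; the ``only if'' direction via composition closure of $\mathrm{FREE}$ matches the paper's Eq.~\eqref{a1}$\Leftrightarrow$\eqref{a2} step. Two small points need tightening. First, adding $\lambda I_{B_{0}B_{1}}$ to $J_{B}^{\mathcal{P}}$ and ``re-normalizing the trace'' yields a positive semi-definite Choi matrix with the right trace but not necessarily a \emph{channel}, since the trace-preservation condition $J_{B_{0}}^{\mathcal{P}}\propto I_{B_{0}}$ is not enforced; you must instead use the freedom you already identified and add $X_{B_{0}}\otimes I_{B_{1}}$ with $X_{B_{0}}=\frac{1}{\left|B_{1}\right|}\left(cI_{B_{0}}-J_{B_{0}}^{\mathcal{P}}\right)$ for $c$ large enough --- the paper's choice $J_{B}^{\widetilde{\mathcal{P}}}=\left(1-\varepsilon\right)I_{B_{0}}\otimes u_{B_{1}}+\varepsilon\left(J_{B}^{\mathcal{P}}+\left(I_{B_{0}}-J_{B_{0}}^{\mathcal{P}}\right)\otimes u_{B_{1}}\right)$ is exactly this fix. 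Second, your claimed identity $f_{\mathcal{P}}\left(\mathcal{M}_{B}\right)=\left\langle \mathcal{P}_{B},\mathcal{M}_{B}\right\rangle$ is false in general (only $f_{\mathcal{P}}\left(\mathcal{M}_{B}\right)\geq\left\langle \mathcal{P}_{B},\mathcal{M}_{B}\right\rangle$ holds, by taking $\Theta'=\mathbbm{1}_{B}$), and the cited ``reverse inequality from the only-if argument'' does not exist; fortunately the $\geq$ direction is all the contradiction requires, so this slip is harmless.
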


\begin{proof}
Denote
\[
\mathfrak{C}_{\mathcal{N}}:=\left\{ \Theta\left[\mathcal{N}\right]:\Theta\in\mathrm{FREE}\left(A\to B\right)\right\} .
\]
Since we assume that $\mathrm{FREE}$ is convex and closed, so is
$\mathfrak{C}_{\mathcal{N}}$. Therefore, by the supporting hyperplane
theorem, $\mathcal{M}_{B}\not\in\mathfrak{C}_{\mathcal{N}}$ if and
only if there exists a Hermitian-preserving map $\mathcal{P}_{B}\in\mathrm{Herm}\left(B_{0}\to B_{1}\right)$
such that
\[
\left\langle \mathcal{P}_{B},\mathcal{M}_{B}\right\rangle >\max_{\Theta\in\mathrm{FREE}\left(A\to B\right)}\left\langle \mathcal{P}_{B},\Theta\left[\mathcal{N}_{A}\right]\right\rangle .
\]
Alternatively, $\mathcal{M}_{B}\in\mathfrak{C}_{\mathcal{N}}$ if
and only if for all Hermitian-preserving maps $\mathcal{P}_{B}\in\mathrm{Herm}\left(B_{0}\to B_{1}\right)$
\begin{equation}
\left\langle \mathcal{P}_{B},\mathcal{M}_{B}\right\rangle \leq\max_{\Theta\in\mathrm{FREE}\left(A\to B\right)}\left\langle \mathcal{P}_{B},\Theta\left[\mathcal{N}_{A}\right]\right\rangle .\label{a1}
\end{equation}
First we show that the above inequality holds for all Hermitian-preserving
maps $\mathcal{P}\in\mathrm{Herm}\left(B_{0}\to B_{1}\right)$ if
and only if
\begin{align}
f_{\mathcal{P}}\left(\mathcal{M}_{B}\right)= & \max_{\Theta'\in\mathrm{FREE}\left(B\to B\right)}\left\langle \mathcal{P}_{B},\Theta'\left[\mathcal{M}_{B}\right]\right\rangle \nonumber \\
 & \leq\max_{\Theta\in\mathrm{FREE}\left(A\to B\right)}\left\langle \mathcal{P}_{B},\Theta\left[\mathcal{N}_{A}\right]\right\rangle \nonumber \\
 & =f_{\mathcal{P}}\left(\mathcal{N}_{A}\right)\label{a2}
\end{align}
for all Hermitian-preserving $\mathcal{P}_{B}\in\mathrm{Herm}\left(B_{0}\to B_{1}\right)$.
Indeed, if Eq.~\eqref{a2} holds, then take $\Theta'$ to be the
identity superchannel $\mathbbm{1}_{B}$; thus we immediately get
Eq.~\eqref{a1} because
\begin{align*}
\left\langle \mathcal{P}_{B},\mathcal{M}_{B}\right\rangle  & \leq\max_{\Theta'\in\mathrm{FREE}\left(B\to B\right)}\left\langle \mathcal{P}_{B},\Theta'\left[\mathcal{M}_{B}\right]\right\rangle \\
 & \leq\max_{\Theta\in\mathrm{FREE}\left(A\to B\right)}\left\langle \mathcal{P}_{B},\Theta\left[\mathcal{N}_{A}\right]\right\rangle .
\end{align*}
Conversely, suppose Eq.~\eqref{a1} holds. Then, for any $\Theta'\in\mathrm{FREE}\left(B\to B\right)$
we have
\begin{align*}
\left\langle \mathcal{P}_{B},\Theta'\left[\mathcal{M}_{B}\right]\right\rangle  & =\left\langle \Theta^{\prime*}\left[\mathcal{P}_{B}\right],\mathcal{M}_{B}\right\rangle \\
 & \leq\max_{\Theta\in\mathrm{FREE}\left(A\to B\right)}\left\langle \Theta^{\prime*}\left[\mathcal{P}_{B}\right],\Theta\left[\mathcal{N}_{A}\right]\right\rangle \\
 & =\max_{\Theta\in\mathrm{FREE}\left(A\to B\right)}\left\langle \mathcal{P}_{B},\left(\Theta'\circ\Theta\right)\left[\mathcal{N}_{A}\right]\right\rangle \\
 & \leq\max_{\Theta\in\mathrm{FREE}\left(A\to B\right)}\left\langle \mathcal{P}_{B},\Theta\left[\mathcal{N}_{A}\right]\right\rangle ,
\end{align*}
where the first inequality follows from assuming Eq.~\eqref{a1},
and the last inequality from the property that if $\Theta$ and $\Theta'$
are both free, then $\Theta'\circ\Theta$ is also free. Eq.~\eqref{a2}
immediately holds. 

It is left to show that it is sufficient to take $\mathcal{P}_{B}$
to be a CPTP map. To this end, it will be convenient to express the
inner products in terms of the Choi matrices. Now, for any Hermitian-preserving
map $\mathcal{P}_{B}$, consider a CPTP map $\widetilde{\mathcal{P}}_{B}$
whose Choi matrix is 
\[
J_{B}^{\widetilde{\mathcal{P}}}:=\left(1-\varepsilon\right)I_{B_{0}}\otimes u_{B_{1}}+\varepsilon\left(J_{B}^{\mathcal{P}}+\left(I_{B_{0}}-J_{B_{0}}^{\mathcal{P}}\right)\otimes u_{B_{1}}\right),
\]
where $\varepsilon>0$ is small enough so that $J_{B}^{\widetilde{\mathcal{P}}}\geq0$.
Note also that $J_{B_{0}}^{\widetilde{\mathcal{P}}}=I_{B_{0}}$ so
that $\widetilde{\mathcal{P}}_{B}$ is a quantum channel. Now, a key
observation is that, for any quantum channel $\mathcal{N}_{B}$, we
get
\begin{align*}
 & \left\langle \widetilde{\mathcal{P}}_{B},\mathcal{M}_{B}\right\rangle =\mathrm{tr}\left[J_{B}^{\widetilde{\mathcal{P}}}J_{B}^{\mathcal{M}}\right]\\
 & =\left(1-\varepsilon\right)\frac{\left|B_{0}\right|}{\left|B_{1}\right|}+\varepsilon\mathrm{tr}\left[J_{B}^{\mathcal{P}}J_{B}^{\mathcal{M}}\right]+\varepsilon\frac{\left|B_{0}\right|}{\left|B_{1}\right|}-\varepsilon\frac{1}{\left|B_{1}\right|}\mathrm{tr}\left[J_{B}^{\mathcal{P}}\right].
\end{align*}
Hence
\[
\left\langle \widetilde{\mathcal{P}}_{B},\mathcal{M}_{B}\right\rangle =\varepsilon\left\langle \mathcal{P}_{B},\mathcal{M}_{B}\right\rangle +c^{\mathcal{P}},
\]
where 
\[
c^{\mathcal{P}}:=\frac{1}{\left|B_{1}\right|}\left(\left|B_{0}\right|-\varepsilon\mathrm{tr}\left[J_{B}^{\mathcal{P}}\right]\right)
\]
is a constant depending only on $\mathcal{P}_{B}$. Therefore, Eqs.~\eqref{a1}
and \eqref{a2} hold for $\mathcal{P}_{B}$ if and only if they hold
for $\widetilde{\mathcal{P}}_{B}$. In other words, it is sufficient
to consider CPTP maps $\mathcal{P}_{B}$.
\end{proof}
\begin{rem}
The definition of the functions $f_{\mathcal{P}}$ makes them convex.
\end{rem}

\begin{rem}
Similar families of monotones have been given recently in Refs.~\citep{Gour-single-shot,Girard,Chitambar-2018b,Gour2018,Gour2018b}
for static resource theories, and in Ref.~\citep{Takagi-Regula}
in the context of channel discrimination tasks (see also the related
discussion in Ref.~\citep{Single-shot-new}). The monotones constructed
in theorem~\ref{thm:monotones} can be reduced to all the ones introduced
in Ref.~\citep{Gour-single-shot,Girard,Chitambar-2018b,Gour2018,Gour2018b,Takagi-Regula,Single-shot-new},
when restricting some of the input/output subsystems to be trivial
or classical.
\end{rem}

The functions $f_{\mathcal{P}}$ behave monotonically under free superchannels,
therefore also under superchannels that replace any input channel
with a fixed free channel. This in turn implies that, for every $\mathcal{P}$,
$f_{\mathcal{P}}$ take the same value on all free channels $\mathcal{N}\in\mathfrak{F}\left(A_{0}\to A_{1}\right)$:
if $\mathcal{N}\in\mathfrak{F}\left(A_{0}\to A_{1}\right)$ we have
\begin{align*}
f_{\mathcal{P}}\left(\mathcal{N}_{A}\right) & =\max_{\Theta\in\mathrm{FREE}\left(A\to B\right)}\left\langle \mathcal{P}_{B},\Theta\left[\mathcal{N}_{A}\right]\right\rangle \\
 & =\max_{\mathcal{M}\in\mathfrak{F}\left(B_{0}\to B_{1}\right)}\left\langle \mathcal{P}_{B},\mathcal{M}_{B}\right\rangle \\
 & \equiv g\left(\mathcal{P}_{B}\right).
\end{align*}
Therefore, if we want monotones that vanish on free channels, for
any $\mathcal{P}\in\mathrm{CPTP}\left(B_{0}\to B_{1}\right)$, define
\[
G_{\mathcal{P}}\left(\mathcal{N}_{A}\right):=f_{\mathcal{P}}\left(\mathcal{N}_{A}\right)-g\left(\mathcal{P}_{B}\right).
\]
In this way, $\left\{ G_{\mathcal{P}}\right\} $ is a complete set
of non-negative resource monotones that vanish on free channels.

The way $f_{\mathcal{P}}$ were constructed means that they can be
expressed in terms of resource witnesses. To see why, denote the set
of (free) Choi matrices by 
\begin{equation}
\mathfrak{J}_{AB}:=\left\{ \mathbf{J}_{AB}^{\Theta}:\Theta\in\mathrm{FREE}\left(A\to B\right)\right\} .\label{eq:Jset}
\end{equation}
Since $\mathrm{FREE}\left(A\to B\right)$ is closed and convex, so
is $\mathfrak{J}_{AB}$. The monotones $f_{\mathcal{P}}$ can be expressed
as
\begin{align*}
f_{\mathcal{P}}\left(\mathcal{N}_{A}\right) & =\max_{\Theta\in\mathrm{FREE}\left(A\to B\right)}\left\langle \mathcal{P}_{B},\Theta\left[\mathcal{N}_{A}\right]\right\rangle \\
 & =\max_{\Theta\in\mathrm{FREE}\left(A\to B\right)}\mathrm{tr}\left[J_{B}^{\mathcal{P}}J_{B}^{\Theta\left[\mathcal{N}_{A}\right]}\right]\\
 & =\max_{\Theta\in\mathrm{FREE}\left(A\to B\right)}\mathrm{tr}\left[\mathbf{J}_{AB}^{\Theta}\left(\left(J_{A}^{\mathcal{N}}\right)^{T}\otimes J_{B}^{\mathcal{P}}\right)\right]\\
 & =\max_{\mathbf{J}_{AB}\in\mathfrak{J}_{AB}}\mathrm{tr}\left[\mathbf{J}_{AB}\left(\left(J_{A}^{\mathcal{N}}\right)^{T}\otimes J_{B}^{\mathcal{P}}\right)\right].
\end{align*}
In other terms, $f_{\mathcal{P}}\left(\mathcal{N}_{A}\right)$ is
the support function of $\mathfrak{J}_{AB}$ evaluated at $\left(J_{A}^{\mathcal{N}}\right)^{T}\otimes J_{B}^{\mathcal{P}}$.
Let $\mathfrak{K}$ be the (convex) cone obtained from $\mathfrak{J}_{AB}$
by multiplying its elements by a non-negative number, i.e.\ $\mathfrak{K}:=\mathbb{R}_{+}\mathfrak{J}_{AB}$.
With this definition we can write
\[
f_{\mathcal{P}}\left(\mathcal{N}_{A}\right)=\max_{\substack{\mathbf{J}_{AB}\in\mathfrak{K}\\
\mathrm{tr}\left[\mathbf{J}_{AB}\right]=\left|A_{1}B_{0}\right|
}
}\mathrm{tr}\left[\mathbf{J}_{AB}\left(\left(J_{A}^{\mathcal{N}}\right)^{T}\otimes J_{B}^{\mathcal{P}}\right)\right].
\]
The above optimization problem is a conic linear program. As such,
using duality, $f_{\mathcal{P}}$ can be equivalently expressed as
\[
f_{\mathcal{P}}\left(\mathcal{N}_{A}\right)=\left|A_{1}B_{0}\right|\min\left\{ x:xI_{AB}-\left(J_{A}^{\mathcal{N}}\right)^{T}\otimes J_{B}^{\mathcal{P}}\in\mathfrak{K}^{*}\right\} ,
\]
where $x\in\mathbb{R}$ and $\mathfrak{K}^{*}$ is the dual cone
\begin{equation}
\mathfrak{K}^{*}=\left\{ W\in\mathrm{Herm}\left(AB\right):\mathrm{tr}\left[WM\right]\geq0\quad\forall M\in\mathfrak{K}\right\} .\label{eq:dual cone}
\end{equation}
Since the cone $\mathfrak{K}$ consists of only positive semi-definite
matrices, it follows that any positive semi-definite matrix belongs
to $\mathfrak{K}^{*}$. Note also that we must have $x>0$ in the
equation above, otherwise $M:=xI_{AB}-\left(J_{A}^{\mathcal{N}}\right)^{T}\otimes J_{B}^{\mathcal{P}}<0$,
and therefore $M$ would not belong to $\mathfrak{K}^{*}$. 

The cone $\mathfrak{K}$ is convex and closed. Therefore, as a consequence
of the hyperplane separation theorem, $\mathfrak{K}^{**}=\mathfrak{K}$.
This in particular implies that $M\in\mathfrak{K}$ if and only if
$\mathrm{tr}\left[MW\right]\geq0$ for all $W\in\mathfrak{K}^{*}$.
Hence the Hermitian matrices (observables) in $\mathfrak{K}^{*}$
that are \emph{not} positive semi-definite can be viewed as witnesses
of supermaps that are not free. However, among them, some will only
witness whether or not a matrix $M$ corresponds to a valid superchannel,
while others will witness if it corresponds to a non-free superchannel.

\subsection{Single-shot interconversions with conic linear programming}

Here we consider single-shot interconversions between resources. For
this purpose, following similar ideas to \citep{Tomamichel}, we define
the \emph{conversion distance} for any two channels $\mathcal{N}_{A}$
and $\mathcal{M}_{B}$ as
\[
d_{\mathfrak{F}}\left(\mathcal{N}_{A}\to\mathcal{M}_{B}\right):=\frac{1}{2}\min_{\Theta\in\mathrm{FREE}\left(A\to B\right)}\left\Vert \Theta_{A\to B}\left[\mathcal{N}_{A}\right]-\mathcal{M}_{B}\right\Vert _{\diamond}.
\]
If $d_{\mathfrak{F}}\left(\mathcal{N}_{A}\to\mathcal{M}_{B}\right)\leq\varepsilon$,
for some small $\varepsilon>0$, we will say that $\mathcal{N}_{A}$
can be converted to $\mathcal{M}_{B}$ by free superchannels up to
a small error $\varepsilon$. When $\varepsilon=0$, the conversion
is exact.

Therefore, calculating the conversion distance between two channels
becomes equivalent to determining if the former channel can be converted
into the latter. As such, an important question is whether the conversion
distance can be computed efficiently. First of all, recall that, as
far as the diamond norm is concerned, the answer is positive, because
it can be expressed as the SDP \citep{Watrous-diamond}
\[
\frac{1}{2}\left\Vert \mathcal{E}_{B}-\mathcal{F}_{B}\right\Vert _{\diamond}=\min_{\omega_{B}\geq0;\thinspace\omega_{B}\geq J_{B}^{\mathcal{E}-\mathcal{F}}}\left\Vert \omega_{B_{0}}\right\Vert _{\infty},
\]
for all $\mathcal{E},\mathcal{F}\in\mathrm{CPTP}\left(B_{0}\to B_{1}\right)$.
Now, in Ref.~\citep{Gour-Winter}, it was shown that it can be written
also as 
\[
\frac{1}{2}\left\Vert \mathcal{E}_{B}-\mathcal{F}_{B}\right\Vert _{\diamond}=\min\left\{ \lambda:\lambda\mathcal{Q}_{B}\geq\mathcal{E}_{B}-\mathcal{F}_{B}\right\} ,
\]
where $\mathcal{Q}_{B}\in\mathrm{CPTP}\left(B_{0}\rightarrow B_{1}\right)$.
Now take $\mathcal{E}_{B}:=\Theta_{A\to B}\left[\mathcal{N}_{A}\right]$
and $\mathcal{F}_{B}:=\mathcal{M}_{B}$; $d_{\mathfrak{F}}\left(\mathcal{N}_{A}\to\mathcal{M}_{B}\right)$
becomes
\begin{equation}
d_{\mathfrak{F}}\left(\mathcal{N}_{A}\to\mathcal{M}_{B}\right)=\min\left\{ \lambda:\lambda\mathcal{Q}_{B}\geq\Theta_{A\to B}\left[\mathcal{N}_{A}\right]-\mathcal{M}_{B}\right\} ,\label{singleshot}
\end{equation}
where $\Theta\in\mathrm{FREE}\left(A\to B\right)$ and $\mathcal{Q}\in\mathrm{CPTP}\left(B_{0}\to B_{1}\right)$.
This can be phrased as a conic linear program, so it has a dual (see
appendix~\ref{app:dual} for details), by which $d_{\mathfrak{F}}\left(\mathcal{N}_{A}\to\mathcal{M}_{B}\right)$
can also be expressed as
\begin{equation}
d_{\mathfrak{F}}\left(\mathcal{N}_{A}\to\mathcal{M}_{B}\right)=\max\left\{ t\left|A_{1}B_{0}\right|+\mathrm{tr}\left[\zeta_{B}J_{B}^{\mathcal{M}}\right]\right\} ,\label{ss2}
\end{equation}
subject to the constraints $0\leq\zeta_{B}\leq\eta_{B_{0}}\otimes I_{B_{1}}$,
$\mathrm{tr}\left[\eta_{B_{0}}\right]=1$, and $\left(J_{A}^{\mathcal{N}}\right)^{T}\otimes\zeta_{B}-tI_{AB}\in\mathfrak{K}^{*}$,
where the cone $\mathfrak{K}^{*}$ is the dual of the cone generated
by the Choi matrices of free superchannels (see Eq.~\eqref{eq:dual cone}).
If this cone has a simple characterization, as it happens e.g.\ in
NPT entanglement \citep{Dynamical-entanglement}, the problem of computing
$d_{\mathfrak{F}}\left(\mathcal{N}_{A}\to\mathcal{M}_{B}\right)$
becomes solving an SDP. However, for LOCC entanglement, determining
whether or not a superchannel is free is in general NP-hard \citep{NP-hard1,NP-hard2},
and consequently, so is the computation of $d_{\mathfrak{F}}\left(\mathcal{N}_{A}\to\mathcal{M}_{B}\right)$.

In entanglement theory there is a maximally entangled state \citep{Plenio-review,Review-entanglement},
which, with high enough dimension, can be converted to all other static
and dynamical resources by LOCC. An analogous situation also occurs,
e.g.\ in the resource theories of coherence \citep{Review-coherence}
and of purity \citep{Review-purity}, but, in general, not in all
resource theories. Such a maximal resource is most desirable and,
consequently, it is natural to consider the task of distilling a maximal
resource (i.e.\ resource distillation) and the task of forming a
resource from such a ``golden'' resource (i.e.\ resource cost)
\citep{Resource-currencies,Gour-review,Single-shot-new}. 

More precisely, let $\Phi_{B}^{+}$ be such a maximal resource on
system $B$, and fix $\varepsilon>0$. In the single-shot regime,
the \emph{$\varepsilon$-resource cost} of a channel $\mathcal{N}\in\mathrm{CPTP}\left(A_{0}\to A_{1}\right)$
is defined as
\begin{align}
 & \mathrm{COST}_{\mathfrak{F},\varepsilon}^{\left(1\right)}\left(\mathcal{N}_{A}\right)\nonumber \\
 & :=\log_{2}\min\left\{ \left|B\right|:d_{\mathfrak{F}}\left(\Phi_{B}^{+}\to\mathcal{N}_{A}\right)\leq\varepsilon\right\} \nonumber \\
 & =\log_{2}\min\left\{ \left|B\right|:\varepsilon\mathcal{Q}_{A}\geq\Theta_{B\to A}\left[\Phi_{B}^{+}\right]-\mathcal{N}_{A}\right\} ,\label{eq:cost1}
\end{align}
where $\Theta\in\mathrm{FREE}\left(B\to A\right)$ and $\mathcal{Q}\in\mathrm{CPTP}\left(A_{0}\to A_{1}\right)$.
The second equality in Eq.~\eqref{eq:cost1} follows from Eq.~\eqref{singleshot}.
The \emph{$\varepsilon$-resource distillation} of a channel $\mathcal{N}\in\mathrm{CPTP}\left(A_{0}\to A_{1}\right)$
is, instead, defined as
\begin{align*}
 & \mathrm{DISTILL}_{\mathfrak{F},\varepsilon}^{\left(1\right)}\left(\mathcal{N}_{A}\right)\\
 & :=\log_{2}\max\left\{ \left|B\right|:d_{\mathfrak{F}}\left(\mathcal{N}_{A}\to\Phi_{B}^{+}\right)\leq\varepsilon\right\} \\
 & =\log\max\left\{ \left|B\right|:\varepsilon\mathcal{Q}_{B}\geq\Theta_{A\to B}\left[\mathcal{N}_{A}\right]-\Phi_{B}^{+}\right\} ,
\end{align*}
where, again, $\Theta\in\mathrm{FREE}\left(A\to B\right)$, $\mathcal{Q}\in\mathrm{CPTP}\left(B_{0}\to B_{1}\right)$,
and the second equality follows from Eq.~\eqref{singleshot}.

\subsection{Definitions of various rates in the asymptotic regime\label{subsec:asymptotic}}

In the asymptotic regime, we are interested in the asymptotic rates
of converting one resource into another by means of the set of free
superchannels. The asymptotic\emph{ rate of conversion} from a channel
$\mathcal{N}\in\mathrm{CPTP}\left(A_{0}\to A_{1}\right)$ to a channel
$\mathcal{M}\in\mathrm{CPTP}\left(B_{0}\to B_{1}\right)$ is defined
as
\begin{align*}
 & R_{\mathfrak{F}}\left(\mathcal{N}_{A}\to\mathcal{M}_{B}\right)\\
 & :=\lim_{\varepsilon\to0^{+}}\inf\left\{ \frac{n}{m}:d_{\mathfrak{F}}\left(\mathcal{N}_{A}^{\otimes n}\to\mathcal{M}_{B}^{\otimes m}\right)\leq\varepsilon;\,m,n\in\mathbb{N}\right\} .
\end{align*}
If a maximal resource exists, we can also define the asymptotic resource
\emph{cost} and \emph{distillation} (see also Ref.~\citep{Resource-channels-1})
respectively as
\[
\mathrm{COST}_{\mathfrak{F}}\left(\mathcal{N}_{A}\right):=\lim_{\varepsilon\to0^{+}}\liminf_{n}\frac{1}{n}\mathrm{COST}_{\mathfrak{F},\varepsilon}^{\left(1\right)}\left(\mathcal{N}_{A}^{\otimes n}\right)
\]
and
\[
\mathrm{DISTILL}_{\mathfrak{F}}\left(\mathcal{N}_{A}\right):=\lim_{\varepsilon\to0^{+}}\limsup_{n}\frac{1}{n}\mathrm{DISTILL}_{\mathfrak{F},\varepsilon}^{\left(1\right)}\left(\mathcal{N}_{A}^{\otimes n}\right).
\]
Finally, one can also define the \emph{exact} resource cost and distillation
respectively as 
\[
\mathrm{COST}_{\mathfrak{F}}^{{\rm exact}}\left(\mathcal{N}_{A}\right):=\liminf_{n}\frac{1}{n}\mathrm{COST}_{\mathfrak{F},\varepsilon=0}^{\left(1\right)}\left(\mathcal{N}_{A}^{\otimes n}\right)
\]
and
\[
\mathrm{DISTILL}_{\mathfrak{F}}^{{\rm exact}}\left(\mathcal{N}_{A}\right):=\limsup_{n}\frac{1}{n}\mathrm{DISTILL}_{\mathfrak{F},\varepsilon=0}^{\left(1\right)}\left(\mathcal{N}_{A}^{\otimes n}\right).
\]
All the quantities above are typically very hard to compute. These
definitions mirror the analogous ones in resource theories of states
\citep{Quantum-resource-1,Resource-currencies,EastThesis,Gour-review,Kuroiwa2020generalquantum},
in which $n$ copies of a resource $\mathcal{N}$ are given in parallel,
and therefore they are described by the tensor product $\mathcal{N}^{\otimes n}$.
This is the only possibility for multiple static resources, which
can only be composed in parallel, i.e.\ with tensor product.

However, with dynamical resources, the time in which they are applied
starts playing a role. This is because dynamical resources have a
natural temporal ordering between input and output, and therefore
they can also be composed in non-parallel ways, e.g.\ in sequence.
Therefore when manipulating dynamical resources, it is not enough
to specify the CPTP maps involved but also \emph{when} (and how) they
can be used (see also Ref.~\citep{Resource-channels-2}). This opens
up the possibility of using \emph{adaptive schemes} when we have several
dynamical resources \citep{Adaptive-metrology,Pirandola-adaptive-metrology,Pirandola-LOCC,Kaur2017,Wilde-cost,Wilde-entanglement,Dynamical-entanglement}.
For example, if we have $n$ resources $\mathcal{N}_{1},\dots,\mathcal{N}_{n}$
that are available, respectively, at times $t_{1}\leq t_{2}\leq\dots\leq t_{n}$,
then the most general channel that can be simulated by free operations
using these resources is depicted in Fig.~\ref{comb}, where the
channels $\mathcal{E}_{1},\dots,\mathcal{E}_{n+1}$ are all free.
We use the notation $\mathscr{C}_{n}\left[\mathcal{N}_{1},\dots,\mathcal{N}_{n}\right]$
to describe the resulting channel, and $\mathscr{C}_{n}\left[\mathcal{N}^{n}\right]:=\mathscr{C}_{n}\left[\mathcal{N},\dots,\mathcal{N}\right]$
when all the resources $\mathcal{N}_{1},\dots,\mathcal{N}_{n}$ are
the same, and equal to $\mathcal{N}$. Note that this scheme includes
the two cases in which the $n$ resources are composed in parallel
(i.e.\ $\mathcal{N}_{1}\otimes\dots\otimes\mathcal{N}_{n}$) and
in sequence (i.e.\ $\mathcal{N}_{n}\circ\dots\circ\mathcal{N}_{1}$).
\begin{rem}
Note that, since the $n$ slots of a quantum comb are causally ordered
\citep{Circuit-architecture,Hierarchy-combs,Switch}, it is important
to know the order in which the resources are inserted. If the $n$
channels $\mathcal{N}_{1},\dots,\mathcal{N}_{n}$ are all available
at the initial time, and we do not know which to plug first into the
comb, then we must pick a particular ordering of them. More formally,
we need to pick a permutation $\pi\in S_{n}$ that fixes the causal
ordering between the $n$ resources, whereby their most general manipulation
is $\mathscr{C}_{n}\left[\mathcal{N}_{\pi\left(1\right)},\dots,\mathcal{N}_{\pi\left(n\right)}\right]$.
\end{rem}

With this in mind, when a maximal resource exists, we define the single-shot
\emph{adaptive} $\varepsilon$-resource\emph{ cost} of a channel $\mathcal{N}\in\mathrm{CPTP}\left(A_{0}\to A_{1}\right)$
as
\begin{align*}
 & \mathrm{COST}_{\mathfrak{F},\varepsilon}^{\left(n\right),{\rm Ad}}\left(\mathcal{N}_{A}\right)\\
 & :=\log\min\left\{ \left|B\right|^{n}:d_{\mathfrak{F}}\left(\mathscr{C}_{n}\left[\Phi_{B}^{+n}\right]\to\mathcal{N}_{A}\right)\leq\varepsilon\right\} .
\end{align*}
The single-shot \emph{adaptive} $\varepsilon$-resource \emph{distillation}
of a channel $\mathcal{N}\in\mathrm{CPTP}\left(A_{0}\to A_{1}\right)$
is, instead, defined as
\begin{align*}
 & \mathrm{DISTILL}_{\mathfrak{F},\varepsilon}^{\left(n\right),{\rm Ad}}\left(\mathcal{N}_{A}\right)\\
 & :=\log\max\left\{ \left|B\right|:d_{\mathfrak{F}}\left(\mathscr{C}_{n}\left[\mathcal{N}_{A}^{n}\right]\to\Phi_{B}^{+}\right)\leq\varepsilon\right\} .
\end{align*}

The \emph{asymptotic adaptive rate} of conversion from a channel $\mathcal{N}\in\mathrm{CPTP}\left(A_{0}\to A_{1}\right)$
to a channel $\mathcal{M}\in\mathrm{CPTP}\left(B_{0}\to B_{1}\right)$
by free operations is given by
\begin{align*}
 & R_{\mathfrak{F}}^{{\rm Ad}}\left(\mathcal{N}_{A}\to\mathcal{M}_{B}\right)\\
 & :=\lim_{\varepsilon\to0^{+}}\inf\left\{ \frac{n}{m}:d_{\mathfrak{F}}\left(\mathscr{C}_{n}\left[\mathcal{N}^{n}\right]\to\mathcal{M}_{B}^{m}\right)\leq\varepsilon;\thinspace m,n\in\mathbb{N}\right\} .
\end{align*}
Here by $\mathcal{M}_{B}^{m}$ we denote the channel $\mathscr{D}_{m}\left[\mathcal{M}_{B}^{m}\right]$,
i.e.\ the action of a (possibly non-free) comb $\mathscr{D}_{m}$
on $m$ copies of $\mathcal{M}_{B}$ inserted in its $m$ slots. Again,
this also includes the case in which the target resource $\mathcal{M}_{B}$
arises in $m$ parallel copies, i.e.\ $\mathcal{M}_{B}^{\otimes m}$.
If a maximal resource exists, we can also define the \emph{asymptotic
adaptive resource cost} and \emph{adaptive resource distillation}
respectively as
\[
\mathrm{COST}_{\mathfrak{F}}^{{\rm Ad}}\left(\mathcal{N}_{A}\right):=\lim_{\varepsilon\to0^{+}}\liminf_{n}\frac{1}{m}\mathrm{COST}_{\mathfrak{F},\varepsilon}^{\left(n\right),{\rm Ad}}\left(\mathcal{N}_{A}^{m}\right)
\]
and
\[
\mathrm{DISTILL}_{\mathfrak{F}}^{{\rm Ad}}\left(\mathcal{N}_{A}\right):=\lim_{\varepsilon\to0^{+}}\lim_{n\to+\infty}\frac{1}{n}\mathrm{DISTILL}_{\mathfrak{F},\varepsilon}^{\left(n\right),{\rm Ad}}\left(\mathcal{N}_{A}\right),
\]
where, as above $\mathcal{N}_{A}^{m}$ denotes the action of a (possibly
non-free) comb on $m$ copies of $\mathcal{N}_{A}$ (note that $m$
depends on $n$). The adaptive \emph{exact} resource distillation
and resource cost are defined similarly as above.

\section{Conclusions\label{sec:Conclusions}}

In this article we presented the general framework for resource theories
of quantum processes. In particular, we introduced a new construction
of a complete family of monotones governing the simulation of channels
by free superchannels, which is valid in all \emph{convex} resource
theories of quantum processes. We showed that the problem of resource
interconversion can be turned into a conic linear program, whose hardness
depends on the particular resource theory under consideration.

Moreover, we also showed that shifting our focus from states to processes
introduces a richer landscape of protocols that can be implemented
for resource conversions. This stems from the fact that channels,
unlike states, have an input and an output, therefore they can be
composed in a variety of ways. Hence the most general manipulation
of multiple copies of a resource follows an \emph{adaptive schem}e,
in which the various copies are inserted into the slots of a free
circuit (a free comb). This scheme is most general, as it includes
the well-known case of the tensor product of many copies. This added
layer of complexity makes resource theories of processes far more
complicated to study than resource theories of states. 

However, this is not the only extra complication. A further difficulty
concerns the realization of free superchannels. We saw that a priori
there is no guarantee that all free superchannels are also freely
realizable, i.e.\ they can be implemented with free pre-processing
and post-processing channels. We conjecture that in fact there exist
free superchannels that are not freely realizable. In general, it
is hard to determine if a given free superchannel admits a free realization,
so we were not able to provide a conclusive answer to this issue.
However, the results we announced in Ref.~\citep{Dynamical-entanglement}
suggest that focusing only on freely realizable superchannels makes
the issue of studying resource interconversion much more complicated
than considering generic free superchannels.

A further possible generalization is to relax the hypothesis of causal
manipulation of multiple dynamical resources. Indeed, when multiple
resources are plugged into a free quantum comb to be converted, the
order in which they are inserted matters, for the slots of the comb
are causally ordered, and a resource cannot be used to ``influence''
the others that causally precede it. In this case, not restricting
to combs, but also considering superpositions of causal orders in
resource processing \citep{Process-matrix,Giarmatzi} might help us
get an advantage on resource manipulation, as we already know this
to happen in the case of the quantum switch \citep{Switch,Chiribella-zero,Chiribella-communication,chiribella2019quantum,Indefinite-metrology}.
\begin{acknowledgments}
G.\ G.\ would like to thank Francesco Buscemi, Eric Chitambar, Mark
Wilde, and Andreas Winter for many useful discussions related to the
topic of this paper. The authors acknowledge support from the Natural
Sciences and Engineering Research Council of Canada (NSERC) through
grant RGPIN-2020-03938, from the Pacific Institute for the Mathematical
Sciences (PIMS), and a from Faculty of Science Grand Challenge award
at the University of Calgary.
\end{acknowledgments}

\bibliographystyle{apsrev4-2}
\bibliography{PPT}

\onecolumngrid

\appendix

\section{How to calculate the dual of \textmd{\textup{\normalsize{}$d_{\mathfrak{F}}\left(\mathcal{N}\to\mathcal{M}\right)$\label{app:dual}}}}

The first step to determine the dual of the conic linear program associated
with Eq.~\eqref{singleshot} is to express $d_{\mathfrak{F}}\left(\mathcal{N}\to\mathcal{M}\right)$
using Choi matrices and the characterization of the diamond norm in
Ref.~\citep{Watrous-diamond}. We have
\begin{align}
d_{\mathfrak{F}}\left(\mathcal{N}_{A}\to\mathcal{M}_{B}\right) & =\min\left\{ \lambda:\omega_{B}\geq\mathrm{tr}_{A}\left[\mathbf{J}_{AB}^{\Theta}\left(\left(J_{A}^{\mathcal{N}}\right)^{T}\otimes I_{B}\right)\right]-J_{B}^{\mathcal{M}};\thinspace\lambda I_{B_{0}}\geq\omega_{B_{0}};\thinspace\Theta\in\mathrm{FREE}\left(A\to B\right);\thinspace\omega_{B}\geq0\right\} \nonumber \\
 & =\min\left\{ \lambda:\omega_{B}\geq\mathrm{tr}_{A}\left[\alpha_{AB}\left(\left(J_{A}^{\mathcal{N}}\right)^{T}\otimes I_{B}\right)\right]-J_{B}^{\mathcal{M}};\thinspace\lambda I_{B_{0}}\geq\omega_{B_{0}};\thinspace\omega_{B}\geq0;\thinspace\alpha_{AB}\in\mathfrak{J}_{AB}\right\} ,\label{eq:d_f 1}
\end{align}
where $\mathfrak{J}_{AB}$ is the set of the Choi matrices of free
superchannels (cf.\ Eq.~\eqref{eq:Jset}). We want to work with
the dual problem using conic linear programming, but $\mathfrak{J}_{AB}$,
albeit convex, is not a cone. Therefore we consider the cone $\mathfrak{K}$
generated by $\mathfrak{J}_{AB}$ (see subsection~\ref{subsec:mono}).
Now Eq.~\eqref{eq:d_f 1} can be rewritten as
\begin{align*}
 & d_{\mathfrak{F}}\left(\mathcal{N}_{A}\to\mathcal{M}_{B}\right)\\
 & =\min\left\{ \lambda:\omega_{B}\geq\mathrm{tr}_{A}\left[\alpha_{AB}\left(\left(J_{A}^{\mathcal{N}}\right)^{T}\otimes I_{B}\right)\right]-J_{B}^{\mathcal{M}};\thinspace\lambda I_{B_{0}}\geq\omega_{B_{0}};\thinspace\omega_{B}\geq0;\thinspace\alpha_{AB}\in\mathfrak{K};\thinspace\mathrm{tr}\left[\alpha_{AB}\right]=\left|A_{1}B_{0}\right|\right\} .
\end{align*}
Now, following Ref.~\citep{Barvinok}, consider the two convex cones
\[
\mathfrak{K}_{1}:=\left\{ \left(\lambda,\omega_{B},\alpha_{AB}\right):\lambda\in\mathbb{R}_{+};\,\omega_{B}\geq0;\thinspace\alpha_{AB}\in\mathfrak{K}\right\} 
\]
\[
\mathfrak{K}_{2}:=\left\{ \left(R_{B_{0}},P_{B},0\right):R_{B_{0}}\geq0;\thinspace P_{B}\geq0\right\} .
\]
$\mathfrak{K}_{1}$ is a subset of the vector space $\mathbb{R}\oplus\mathrm{Herm}\left(B\right)\oplus\mathrm{Herm}\left(AB\right)$,
whereas $\mathfrak{K}_{2}$ is a subset of $\mathrm{Herm}\left(B_{0}\right)\oplus\mathrm{Herm}\left(B\right)\oplus\mathbb{R}$.
These two vector spaces carry an inner product. For $\mathbb{R}\oplus\mathrm{Herm}\left(B\right)\oplus\mathrm{Herm}\left(AB\right)$
it is
\[
\left\langle \left(\lambda,\omega_{B},\alpha_{AB}\right),\left(\lambda',\omega'_{B},\alpha'_{AB}\right)\right\rangle =\lambda\lambda'+\mathrm{tr}\left[\omega_{B}\omega'_{B}\right]+\mathrm{tr}\left[\alpha_{AB}\alpha'_{AB}\right];
\]
for $\mathrm{Herm}\left(B_{0}\right)\oplus\mathrm{Herm}\left(B\right)\oplus\mathbb{R}$
it is
\[
\left\langle \left(\eta_{B_{0}},\zeta_{B},t\right),\left(\eta'_{B_{0}},\zeta'_{B},t'\right)\right\rangle =\mathrm{tr}\left[\eta_{B_{0}}\eta'_{B_{0}}\right]+\mathrm{tr}\left[\zeta_{B}\zeta'_{B}\right]+tt'.
\]

Now consider the linear map $\mathcal{L}:\mathbb{R}\oplus\mathrm{Herm}\left(B\right)\oplus\mathrm{Herm}\left(AB\right)\rightarrow\mathrm{Herm}\left(B_{0}\right)\oplus\mathrm{Herm}\left(B\right)\oplus\mathbb{R}$.
Its action on a generic element $X=\left(\lambda,\omega_{B},\alpha_{AB}\right)$
of $\mathfrak{K}_{1}$ is
\[
\mathcal{L}\left(X\right):=\left(\lambda I_{B_{0}}-\omega_{B_{0}},\omega_{B}-\mathrm{tr}_{A}\left[\alpha_{AB}\left(\left(J_{A}^{\mathcal{N}}\right)^{T}\otimes I_{B}\right)\right],\mathrm{tr}\left[\alpha_{AB}\right]\right).
\]
Notice that this specifies $\mathcal{L}$ completely because $\mathfrak{K}_{1}$
spans the whole domain of $\mathcal{L}$. Now consider 
\[
H_{1}=\left(1,0_{B},0_{AB}\right)\in\mathbb{R}\oplus\mathrm{Herm}\left(B\right)\oplus\mathrm{Herm}\left(AB\right)
\]
and
\[
H_{2}=\left(0_{B_{0}},J_{B}^{\mathcal{M}},\left|A_{1}B_{0}\right|\right)\in\mathrm{Herm}\left(B_{0}\right)\oplus\mathrm{Herm}\left(B\right)\oplus\mathbb{R}.
\]
With this notation we can write \citep{Barvinok}
\begin{align*}
d_{\mathfrak{F}}\left(\mathcal{N}_{A}\to\mathcal{M}_{B}\right) & =\min\left\{ \left\langle X,H_{1}\right\rangle :\mathcal{L}\left(X\right)-H_{2}\in\mathfrak{K}_{2};\thinspace X\in\mathfrak{K}_{1}\right\} \\
 & =\max\left\{ \left\langle Y,H_{2}\right\rangle :H_{1}-\mathcal{L}^{*}\left(Y\right)\in\mathfrak{K}_{1}^{*};\thinspace Y\in\mathfrak{K}_{2}^{*}\right\} ,
\end{align*}
where the second equality follows from strong duality. We only need
to calculate $\mathcal{L}^{*}\left(Y\right)$, where $Y=\left(\eta_{B_{0}},\zeta_{B},t\right)$
is in $\mathrm{Herm}\left(B_{0}\right)\oplus\mathrm{Herm}\left(B\right)\oplus\mathbb{R}$.
We have
\[
\mathcal{L}^{*}\left(Y\right)=\left(\mathrm{tr}\left[\eta_{B_{0}}\right],\zeta_{B}-\eta_{B_{0}}\otimes I_{B_{1}},tI_{AB}-\left(J_{A}^{\mathcal{N}}\right)^{T}\otimes\zeta_{B}\right).
\]
Hence,
\begin{align*}
d_{\mathfrak{F}}\left(\mathcal{N}_{A}\to\mathcal{M}_{B}\right) & =\max\left\{ t\left|A_{1}B_{0}\right|+\mathrm{tr}\left[\zeta_{B}J_{B}^{\mathcal{M}}\right]:\mathrm{tr}\left[\eta_{B_{0}}\right]\leq1;\thinspace0\leq\zeta_{B}\leq\eta_{B_{0}}\otimes I_{B_{1}};\thinspace\left(J_{A}^{\mathcal{N}}\right)^{T}\otimes\zeta_{B}-tI_{AB}\in\mathfrak{K}^{*}\right\} \\
 & =\max\left\{ t\left|A_{1}B_{0}\right|+\mathrm{tr}\left[\zeta_{B}J_{B}^{\mathcal{M}}\right]:\mathrm{tr}\left[\eta_{B_{0}}\right]=1;\thinspace0\leq\zeta_{B}\leq\eta_{B_{0}}\otimes I_{B_{1}};\,\left(J_{A}^{\mathcal{N}}\right)^{T}\otimes\zeta_{B}-tI_{AB}\in\mathfrak{K}^{*}\right\} ,
\end{align*}
where $\mathfrak{K}^{*}$ is the dual of the cone generated by the
Choi matrices of free superchannels. We have obtained Eq.~\eqref{ss2}.
\end{document}